\newtheorem{fact}{Fact}
\newcommand{\PreserveBackslash}[1]{\let\temp=\\#1\let\\=\temp}
\newcolumntype{C}[1]{>{\PreserveBackslash\centering}p{#1}}
\newcolumntype{R}[1]{>{\PreserveBackslash\raggedleft}p{#1}}
\newcolumntype{L}[1]{>{\PreserveBackslash\raggedright}p{#1}}
\begin{document}
\title{Evolutionary Equilibrium Analysis for Decision on Block Size in Blockchain Systems
}
\titlerunning{Decision on Block Size by Evolutionary Equilibrium Analysis }
%
\author{Jinmian Chen\and Yukun Cheng\inst{*} \and Zhiqi Xu \and Yan Cao}
\authorrunning{JM. Chen \and YK. Cheng et al.}

\institute{School of Business, Suzhou University of Science and Technology,\\
	 Suzhou, 215009, China\\
\email{jinmian\_chen@post.usts.edu.cn, ykcheng@amss.ac.cn, joisexzq@163.com, 150653659@qq.com}}
\maketitle              
\begin{abstract}
 In a PoW-based blockchain network, mining pools (the solo miner could be regarded as a mining pool containing one miner) compete to successfully mine blocks to pursue rewards. Generally, the rewards include the fixed block subsidies and time-varying transaction fees. The transaction fees are offered by the senders whose transactions are packaged into blocks and is increasing with the block size. However, the larger size of a block brings the longer latency, resulting in a smaller probability of
successfully mining. Therefore, finding the optimal block size to trade off these two factors is a complex and crucial problem for the mining pools. In this paper, we model a repeated mining competition dynamics in blockchain system as an evolutionary game to study the interactions among mining pools. In this game, each pool has two strategies: to follow the default size $\bar{B}$, i.e., the upper bound of a block size, or not follow. Because of the bounded rationality, each mining pool pursues its evolutionary stable block size (ESS) according to the mining pools' computing power and other factors by continuous learning and adjustments during the whole mining process. A study framework is built for the general evolutionary game, based on which we then theoretically explore the existence and stability of the ESSs for a case of two mining pools. Numerical experiments with real Bitcoin data are conducted to show the evolutionary decisions of mining pools and to demonstrate the theoretical findings in this paper.

  \keywords{Blockchain, Block size, Transaction fee, Mining competition, Evolutionary game.}

\end{abstract}

\section{Introduction}
Bitcoin is a decentralized payment system \cite{SN08}, based on a public transaction ledger, which is called the blockchain. Generally, a block is composed of a block header and a block body, which contains a certain amount of transactions. Each transaction is composed of the digital signature of the sender, the transaction data, such as the value of digital tokens, the addresses of the sender and the receiver, as well as the corresponding transaction fee. With the bitcoin system developing, the number of transactions in the whole network increases quickly, while the block size currently is limited to 1 MB. Such a bounded block size results in the congestion of the blockchain network. To alleviate this situation, \emph{Segregated Witness} (SegWit) \cite{EJP15} is brought up and applied to segregate the witness (digital signatures) from the transactions. Then the witness is put into the ``extended block", which has no impact on the original block size. By SegWit, a block is able to contain more transactions, enhancing the transaction processing efficiency. This effect is equivalent to expanding the block size to 2 MB. Thus the block size of a newly mining block may be more than 1 MB.

Proof-of-work (PoW) is the most popular consensus applied in bitcoin blockchain system, which reaches a consensus based on miners' computing power. Under the PoW-based consensus protocol, the process of successfully mining a block includes two steps, i.e., solving the PoW puzzle and propagating the block to be verified. During the propagation, the block is likely to be discarded because of long latency which depends on the size of the block. The larger size of block brings the longer latency, leading to a higher chance that the block suffers \emph{orphaned} \cite{HN16}. So, besides raising income from transaction fees by packaging more transactions in a block, miners need to consider a suitable total size of the block that would not deeply increase the probability to be orphaned. With the incentive of transaction fees and the long latency resulting from large block size, how to select transactions and decide the total block size for maximum payoff is critical for every miner.

Game theory has been widely applied in mining management, such as computational power allocation, fork chain selection, block size setting and pool selection. In terms of block size setting, \cite{PR15} analyzed the quantity setting of block space with the effect of a transaction fee market, in which a block space supply curve and a mempool demand curve were introduced to find the optimal block space for the maximum payoff of miners. By proposing a Bitcoin-unlimited scheme, \cite{ZP17} modeled a non-cooperative game to examine the interaction among the miners, each of whom chooses its own upper bound of the block size while it invalidates and discards the excessive block that is larger than its upper bound. And the game was proved to exist an unique Nash equilibrium where all miners choose the same upper bound. Given the limitation on the number of transactions included in the block, the interaction on choosing transactions between the miners and the users was modeled as a non-cooperative game by \cite{DMS19}. The unique Nash equilibrium of this game can be obtained when satisfying certain conditions, which is related to the number of miners, the hash rate, bitcoin value, the transaction fees, the block subsidy and the cost of the mining. As for dynamic evolutionary behaviors in blockchain, \cite{LWNZW18} respectively modeled the dynamics of block mining selection among pools and pool selection among individual miners as an evolutionary game in a proof-of-work (PoW) blockchain network. It identified the hash rate and the block propagation delay as two major factors resulting in the mining competition outcome. Also, in \cite{PZJFZ21}, evolutionary game was applied to examine the process dynamics of selecting super nodes for transaction verification in the Delegated proof of stake (DPoS) blockchain. The authors found that the strategy of candidates has to do with how much reward they can obtain from the blockchain platform. Inspired by \cite{JW19}, we adopt its novel expected payoff function different with that of \cite{LWNZW18}, to study the block size determination of mining pools in a dynamic process.

In this work, we assume that the mining pools are bounded rational and can adapt their strategy on different block sizes according to the received rewards. Note that the total computing power of each mining pool and the long latency also affect the choices of pools. Accordingly, we model the repeated mining competition as an evolutionary game, where each pool controlling a certain amount of computing power has to decide whether following the default size---2 MB or not, for maximum payoffs. Evolutionary stable strategies are considered to be the solutions of this game. Then we perform theoretical analysis on the existences and corresponding conditions of the ESS for a special case of two mining pools. Finally, simulations are performed to verify the proposed schemes. In addition, we discuss the impact of the hash rate of pools, the unit transaction fee, the unit propagation delay ,as well as the default block size on the strategy decision of mining pools.

The rest of this paper is organized as follows. Section 2 introduces the system model and the reward function of pools. In Section 3, we formulate an evolutionary game model to study the block size selection problem, and particularly analyze a case of two mining pools. Section 4 presents the numerical results and some additional analysis on different factors, and concludes our study.

\section{System Model and and Mining Pool's Expected Reward}

In this paper, we consider the PoW-based blockchain system where there are $n$ mining pools, denoted by $N = \{1,2,\cdots,n\}$, and each contains several miners. All mining pools compete to mine blocks by costing an amount of computing power, and thus to pursue the corresponding rewards. Similar to
 \cite{GKL15} and  \cite{GKWGRC16}, we assume the whole system is in a quasi-static state, meaning no miners join in or leave the system. Under this assumption, each miner keeps its state unchanged, including which mining pool it is in, and how much computing power it has. This leads to the mining pools' constant scale and their total computing power.


By the consensus protocol of PoW, a mining pool, who obtains the reward, must satisfy the following two conditions: it is the first one to solve a proof-of-work puzzle by consuming an amount of computing power and it is also the first one to make its mined block reach the consensus. The expected average block arriving interval is about of $T=600$ seconds, by adjusting the difficulty of the proof-of-work puzzle. The whole mining process in a blockchain system consists of a series of one-shot competitions, in each of which one block is mined. A one-shot competition can be viewed as a non-cooperative game, in which all mining pools are the players and they shall make decisions on the mined blocks' size to maximize their own rewards.

Similar to the model in \cite{JW19}, we compute the expected reward of each mining pool $i$ by regarding its \emph{block finding time} in one-shot competition as a random variable, denoted by $X_i$, which follows the exponential distribution. To be specific, let $B_i$ be the block size decided by pool $i$. Denote $h_i$ ($0\leq h_i\leq 1$) to be the relative computing power of pool $i$, that is the ratio of pool $i$'s computing power to the total computing power in blockchain system. The propagation time of pool $i$'s block is linear with its size $B_i$, that is $q_i=\rho B_i$. It is obvious that pool $i$'s block cannot reach the consensus if the block finding time is less than the propagation time $q_i$.
The mining rate of pool $i$ is denoted by $\lambda_i=\frac{h_i}{T}$, where $T=600$ secs is the average block arriving interval. By the definition of the exponential distribution, the probability density function (PDF) of $X_i$ is
{\small\begin{eqnarray}
 f_{X_i}(t;B_i,\lambda_i)=\left\{ \begin{array}{l}\label{PDFi}
0,t<q_i;\\
\lambda_i e^{-\lambda_i (t-q_i)},t\geq q_i.
\end{array}
\right.
\end{eqnarray}}
and the cumulative distribution function (CDF) of $X_i$ is
{\small\begin{eqnarray*}
F_{X_i}(t;B_i,\lambda_i)=Pr(X_i\leq t)=\left\{ \begin{array}{l}\label{PDFi}
0,~t<q_i;\\
1-e^{-\lambda_i(t-q_i)},~t\geq q_i.
\end{array}
\right.
\end{eqnarray*}}
So, the probability that block finding time of pool $i$ is larger than $t$ is
{\small\begin{eqnarray*}
Pr(X_i> t)=1-F_{X_i}(t;B_i,\lambda_i)=\left\{ \begin{array}{l}\label{PDFi}
1,~t<q_i;\\
e^{-\lambda_i(t-q_i)},~t\geq q_i.
\end{array}
\right.
\end{eqnarray*}}
Define $X$ to be the block finding time among all mining pools. Then $X=\min_{i\in N}\{X_i\}$, i.e. the first time to find a block, and hence
{\small\begin{eqnarray}\label{eq2}
Pr(X> t)&=&\Pi_{i=1,\cdots,n} Pr(X_i> t)=\Pi_{i\in Active(t)} Pr(X_i> t)\nonumber\\
&=&e^{\sum_{i\in Active(t)}[-\lambda_i(t-q_i)]},
\end{eqnarray}}
where $Active(t)=\{i|q_i\leq t\}$ is the pool set, each pool $i$ in which has the propagation time less than time $t$. For convenience, we call each pool $i$ in $Active(t)$ an active pool at time $t$.

From the probability function of (\ref{eq2}), it is not hard to derive the CDF and PDF of random variable $X$ as the follows
{\small\begin{eqnarray}
F_X(t;\bf B, \bf \lambda)&=&1-Pr(X>t)=1-e^{\sum_{i\in Active(t)}[-\lambda_i(t-q_i)]};\label{CDF_X}\\
f_{X}(t;\bf B, \bf \lambda)&=&(\sum_{i\in Active(t)}\lambda_i)e^{\sum_{i\in Active(t)}[-\lambda_i(t-q_i)]},\label{PDF_X}
\end{eqnarray}}
where ${\bf B}=(B_1,\cdots,B_n)$ and $\bf \lambda=(\lambda_1,\cdots,\lambda_n)$.

As stated before, the reward of a mined block comes from two aspects: the fixed subsidies $R$ (e.g., 6.25 BTC for one block currently), and a variable amount of transaction fees.
Particularly, the transaction fees are more dependent on the size of a block, since a block with a larger size contains more transactions.
For the sake of simplicity, we assume the total transaction fee is linearly dependent on the block size, i.e., $\alpha B_i$. This is similar to the suggested pricing standard of transaction fee for users in some token wallets, such as 0.0005 BTC per KB \cite{FeeR}. So the total reward for a block mined by pool $i$ is $R+\alpha B_i$. In addition, the probability that a pool $i$ solves the proof-of-work puzzle at time $t$ is the ratio of its computing power to all other active mining pools at time $t$. Then the reward of mining pool $i$ in expectation at time $t$ is
{\small\begin{eqnarray}\label{payoff_at_t}
E(reward_i|X=t)=\left\{ \begin{array}{l}\label{PDFi}
0,~t<q_i;\\
\frac{h_i}{\sum_{j\in Active(t)}h_j}(R+\alpha B_i),~t\geq q_i.
\end{array}
\right.
\end{eqnarray}}
and then its reward in expectation is expressed as
{\small\begin{eqnarray}
U_i&=&E[E(reward_i|X=t)]\nonumber \\
&=&\int_{-\infty}^{+\infty}E(reward_i|X=t)\cdot f_X(t;\bf B, \bf \lambda)dt \nonumber\\
&=&\lambda_i(R+\alpha B_i)\sum_{l=i}^n\frac{e^{\sum_j \lambda_j(q_j-q_l)}-e^{\sum_j \lambda_j(q_j-q_l+1)}}{\sum_{j\in Active(q_l)}
	\lambda_j},
\end{eqnarray}}
where $q_{n+1}=+\infty$.

\section{Evolutionary Game Model for Decision on Block Size}
In a PoW-based blockchain system, we suppose that there are $n$ independent mining pools, each pool $i$ owning an amount of relative computing power $h_i$. The whole mining process is a series of one-shot competitions, and all mining pools compete to mine a block to win the reward in each one-shot. In this paper, we model the mining competition dynamics as an evolutionary game to study the dynamic interactions among mining pools. In our evolutionary game model, each pool has two kinds of strategies: to follow the default size $\bar{B}$, i.e., the upper bound of a block size, or to choose the block size less than $\bar{B}$. For simlicity, these two strategies is named as "following" strategy and "not following" strategy, respectively.  Because of the bounded rationality, each mining pool pursues its evolutionary stable block size (ESS) through continuous learning and adjustments.

In this section, we first propose the analysis scheme for general case, and then theoretically analyze the existence and stability of the ESS for a case of two mining pools

\subsection{Analysis Scheme}
In our evolutionary game model, there is a crucial problem for each mining pool $i$ that is how to decide the optimal block size to maximize its payoff in expectation. Note that each pool $i$ has two kinds of strategies: one is to fix the block to default size, e.g., $\bar{B}=2 MB$, and the other is to choose a block size $B_i<\bar{B}$. So
in the $k$-th shot, let us define two subsets,
{\small$$N^1(k)=\{i\in N| B_i(k)<\bar{B}\}~\mbox{and}~N^2(k)=N-N^1(k)=\{i\in N|B_i(k)=\bar{B}\},$$}
and call $(N^1(k),N^2(k))$ a \emph{subset profile}. Clearly, subset profile $(N^1(k),N^2(k))$ is determined after all mining pools making decisions on their block sizes. There are $2^n$ subset profiles totally in  each one-shot, and hence we denote the collection of subset profiles in the $k$-th slot by $\mathcal{N}(k)=\{(N^1(k),N^2(k))\}$. 

Suppose that a subset profile $(N^1(k), N^2(k))$ in the $k$-th shot is given.
Each pool $i\in N^1(k)$ selects the "not following" strategy. In addition, it continues to decide the optimal block size $B_i^*<\bar{B}$ by maximizing its expected payoff under a given subset profile $(N^1(k), N^2(k))$.
{\small\begin{eqnarray}
  B_i^*&=&\arg\pi^i_{(N^1(k),N^2(k))}=\arg\max_{B_i< \bar{B}}U_i\nonumber\\
  &=&\arg\max_{B_i< \bar{B}}\left\{ \lambda_i(R+\alpha B_i)\sum_{l=i}^n \frac{e^{\sum_j \lambda_j(q_j-q_l)}-e^{\sum_j \lambda_j(q_j-q_{l+1})}}{\sum_{j\in Active(q_l)} \lambda_j} \right\}.\label{payoff1}
\end{eqnarray}}
Each mining pool $i\in N^2(t)$ sets its block size as $\bar{B}$ and has its payoff
{\small\begin{eqnarray}\label{payoff2}
  \pi^i_{(N^1(k),N^2(k))}=\lambda_i(R+\alpha \bar{B})\sum_{l=i}^n \frac{e^{\sum_j \lambda_j(q_j-q_l)}-e^{\sum_j \lambda_j(q_j-q_{l+1})}}{\sum_{j\in Active(q_l)}\lambda_j}.
\end{eqnarray}}

 During the evolutionary game, the mining pools keep learning to adjust their low-income strategies to a higher-income one dynamically. Until $n$ mining pools reach a stable strategy profile, at which no one would like to change its strategy, an equilibrium state of block size $(B_1^*,B_2^*,\cdots, B_n^*)$ is obtained. Though all the mining competitions are carried out during a series of discrete slots, we can view each block generating slot as a very small interval with respect to the whole mining process, and hence deal with it as a continuous version. It allows us to apply the standard technique to study the evolutionary process for the decisions on block size.

 Let $x_i(k)$, $0\leq x_i(k)\leq 1$, represent the probability of mining pool $i\in N$ to choose the ``not following" strategy at the $k$-th slot. Correspondingly, the probability of pool $i$ to choose the default size is $1-x_i(k)$. If the choice of pool $i$ is not to follow the default size, then its conditional expected payoff is
{\small\begin{eqnarray}
E^{1}_i(k)=\sum_{\substack{(N^1(k),N^2(k))\in \mathcal{N}(k),\\ i\in N^1(k)}}\left(\prod_{l\in N^1(k),l\neq i}x_l(k)\prod_{l\in N^2(k)}(1-x_l(k))\cdot\pi_{(N^1(k),N^2(k))}^i\right).\label{Epayoff1}
\end{eqnarray}}
If mining pool $i$ selects the ``following default size" strategy, then its conditional expected payoff is
{\small\begin{eqnarray}
E^{2}_i(k)=\sum_{\substack{(N^1(k),N^2(k))\in \mathcal{N}(k),\\ i\in N^2(k)}}\left(\prod_{l\in N^1(k)}x_l(k)\prod_{l\in N^2(k),l\neq i}(1-x_l(k))\cdot\pi_{(N^1(k),N^2(k))}^i\right).\label{Epayoff2}
\end{eqnarray}}
Combining (\ref{Epayoff1}) and (\ref{Epayoff2}), the average payoff of mining pool $i$ is
{\small\begin{eqnarray}\label{average_payoff}
\bar{E}_i(k)=x_i(k)E_i^1(k)+(1-x_i(k))E_i^2(k).
\end{eqnarray}}

By \cite{FD98}, the growth rate of a strategy selected by a participant
is just equal to the difference between the payoff of this strategy and its average payoff.
Then the replicator dynamic equations for all mining pools are as follows:
{\small\begin{eqnarray}
  f_i(\mathbf{x})=\dot{x}_i(k)=x_i(k)(E_i^1(k)-\bar{E}_i(k)),~~\forall i\in N.\label{replicatordynamic}
\end{eqnarray}}

According to the replicator dynamics \ref{replicatordynamic}, a mining pool would like to choose a smaller block size, when its conditional payoff $E^1_i(k)$ is larger than the average payoff $\overline{E}_i(k)$. Otherwise, it will set its block size as $\bar{B}$. A state is stable if no mining pool would like to change its strategy over time in the replicator dynamics, and such a stable state is considered to be the evolutionary equilibrium \cite{JGR16}. The strategies in this state are evolutionary stable, called ESS. Specifically speaking, when the payoff of ``not following" strategy is equal to the average payoff for each pool, all mining pools reaches the ESS and no one has incentive to change its current strategy. Therefore, the ESS can be obtained by solving $\dot{x}_i(k)=0$ for all $i\in N$, whose solution is called the \emph{fixed equilibrium point} of replicator dynamics.

\subsection{A Case Study of Two Mining Pools}
Based on the analysis scheme for general case in previous subsection, we continue to study the case of two mining pools $(n=2)$ to exemplify the equilibrium analysis for the decision on block size. We normalize the whole computing power in system, thus mining pool $i$'s relative computing power is  $h_i \in \textbf{h}=\{h_1,h_2\}$ and $\sum_{i=1}^2 h_i=1$. The whole mining process contains a series of one-shot competitions, and pool 1 and pool 2 need to decide their block sizes $B_1$ and $B_2$ to pursue the optimal payoffs in each one-shot competition. Without loss of generality, we concentrate on the case of $0\leq h_1\leq\frac12\leq h_2\leq 1$. The analysis for the case of $h_2\leq h_1\leq$ is symmetric, and thus we omit the discussion. As stated in \cite{JW19}, `` a miner with less mining power prefers a smaller
block size in order to optimize his payoff". Thus the case of $0\leq h_1\leq h_2\leq 1$ leads to $B_1\leq B_2$ in one-shot competition and then the propagation time $q_1=\rho B_1\leq q_2=\rho B_2$.
In a one-shot mining competition, if $B_1\leq B_2\leq\bar{B}$, then
{\small\begin{eqnarray}
U_1&=&\lambda_1(R+\alpha B_1)\sum_{l=1}^2\frac{e^{\sum\lambda_j(q_j-q_l)}-e^{\sum\lambda_j(q_j-q_{l+1})}}{\sum_{j\in Active(q_l)}\lambda_j}\nonumber\\
   &=&(R+\alpha B_1)[1-h_2e^{\lambda_1\rho(B_1-B_2)}];\\
U_2&=&\lambda_2(R+\alpha B_2)\sum_{l=2}^2\frac{e^{\sum\lambda_j(q_j-q_l)}-e^{\sum\lambda_j(q_j-q_{l+1})}}{\sum_{j\in Active(q_l)}\lambda_j}\nonumber\\
   &=&(R+\alpha B_2)h_2e^{\lambda_1\rho(B_1-B_2)}.\label{payoffU1<}
\end{eqnarray}}


Since each mining pool has two kinds of strategies, i.e., to follow the default size $\bar{B}$, and not to follow, in a one-shot competition, there are four strategy profiles: $(B_1, B_2)$, $(B_1, \bar{B})$, $(\bar{B}, B_2)$, and $(\bar{B}, \bar{B})$. Note that $(\bar{B}, B_2)$ either does not exist, or equals to $(\bar{B}, \bar{B})$ under the condition of $B_1\leq B_2\leq\bar{B}$. So we do not discuss this strategy profile any more.

Clearly, each pool would receive different payoffs, under different strategy profiles. For the strategy profile $(B_1,B_2)$, meaning that both pools choose the ``not following" strategy, we define the payoffs of two pools $\pi_{11}^1$ and $\pi_{11}^2$ are their optimal payoffs subject to the conditions of $B_1<\bar{B}$ and $B_2<\bar{B}$. The corresponding optimal block sizes $(B_1^*,B_2^*)$ can be obtained by solving 
{\small $$\frac{\partial U_1(B_1,B_2)}{\partial B_1}=0,~\mbox{and}~\frac{\partial U_2(B_1,B_2)}{\partial B_2}=0,$$} simultaneously. For the strategy profile $(B_1,\bar{B})$, showing that pool 1 would not follow the default size and pool 2's block size is $\bar{B}$, $\pi_{12}^1$ and $\pi_{12}^2$ are denoted to be the payoffs of pool 1 and 2. To be specific, $\pi_{12}^1$ is defined to be the optimal payoff of pool 1 under the condition of $B_1<\bar{B}$ and the corresponding optimal block size $B_1^*$ can be determined by solving $\frac{d U_1(B_1,\bar{B})}{d B_1}=0$. For the strategy profile of $(\bar{B},\bar{B})$, both of two pools set their block sizes as $\bar{B}$, then their payoffs are denoted by $\pi_{22}^1$ and $\pi_{22}^2$.
We illustrate the payoffs of two pools under different strategy profiles in the following payoff matrix (Table \ref{table1}).

\renewcommand\arraystretch{1.5}
{\small\begin{table}[!h]
\begin{center}
\caption{ Payoff matrix of the case of two mining pools.}\label{table1}
\begin{tabular}{ccc}
\hline
\multicolumn{1}{l}{} & \multicolumn{2}{c}{Mining Pool 2}                                                                                                      \\ \cline{2-3}
Mining Pool 1            & $B_2$($x_2$)              & $\bar{B}$(1-$x_2$)
\\ \cline{2-3}
$B_1$($x_1$)         & $(\pi_{11}^1,\pi_{11}^2)$          & $(\pi_{12}^1,\pi_{12}^2)$ \\
$\bar{B}$(1-$x_1$)   & $(\backslash, \backslash)$          & $(\pi_{22}^1,\pi_{22}^2)$ \\
\hline
\end{tabular}
\end{center}
\end{table}}

\begin{lemma}\label{payoff}
	In a one-shot mining competition, if $B_1\leq B_2\leq \bar{B}$, then
	\begin{enumerate}
		\item For strategy profile $(B_1,B_2)$ with $0\leq B_1< B_2< \bar{B}$, the optimal block size of pool 2 is $B_2^*=\frac{1}{\lambda_1\rho}-\frac{R}{\alpha}$, if $0<\frac{1}{\lambda_1\rho}-\frac{R}{\alpha}<\bar{B}$. Let $\widehat{B}_1^*$ be the solution satisfying $\frac{dU_1(B_1,B_2^*)}{d B_1}=0$. If $0\leq \widehat{B}_1^*<B_2^*$, then the optimal block size of pool 1 is $\widehat{B}_1^*$. Then the payoffs are
{\small\begin{eqnarray}
		\pi_{11}^1=(R+\alpha \widehat{B}^{*}_1)[1-h_2e^{\lambda_1\rho(\widehat{B}^{*}_1-B_2^*)}],~
		\pi_{11}^2=[R+\alpha B_2^*]h_2e^{\lambda_1\rho(\widehat{B}^{*}_1-B_2^*)}.\label{payoff111}
		\end{eqnarray}}
If $\widehat{B}_1^*<0$, then the best choice of pool 1 is to set its block size as zero and the payoffs are
 {\small\begin{eqnarray}		\pi_{11}^1=R[1-h_2e^{-\lambda_1\rho B_2^*}],~	\pi_{11}^2=(R+\alpha B_2^*)h_2e^{-\lambda_1\rho B_2^*}.\label{payoff112}
		\end{eqnarray}}
		\item For strategy profile $(B_1,\bar{B})$ with $0<B_1<B_2=\bar{B}$, the block size of pool 2 is $\bar{B}$. Let $\widetilde{B}_1^*$ be the solution satisfying $\frac{dU_1(B_1,\bar{B})}{d B_1}=0$. If $0\leq \widetilde{B}_1^*<\bar{B}$, then the optimal block size of pool 1 is $\widetilde{B}_1^*$. Then the payoffs are
{\small\begin{eqnarray}
		\pi_{12}^1=(R+\alpha \widetilde{B}_1^{*})[1-h_2e^{\lambda_1\rho(\widetilde{B}_1^{*}-\bar{B})}	],~
		\pi_{12}^2=(R+\alpha \bar{B})h_2e^{\lambda_1\rho(\widetilde{B}_1^{*}-\bar{B})}.\label{payoff122}
		\end{eqnarray}}
If $\widetilde{B}_1^*<0$, then the best choice of pool 1 is to set its block size as zero and the payoffs are
 {\small\begin{eqnarray}
		\pi_{12}^1=R[1-h_2e^{-\lambda_1\rho\bar{B}}	],~
		\pi_{12}^2=(R+\alpha \bar{B})h_2e^{-\lambda_1\rho\bar{B}}.\label{payoff122}
		\end{eqnarray}}
		\item For strategy profile $(\bar{B},\bar{B})$, the block sizes of two pools are both equal to $\bar{B}$ and the corresponding payoffs are
		{\small\begin{eqnarray}		\pi_{22}^{1}=h_1(R+\alpha\bar{B})~~\mbox{and}~~\pi_{22}^{2}=h_2(R+\alpha\bar{B}).\label{payoff22}
		\end{eqnarray}}
	\end{enumerate}	
\end{lemma}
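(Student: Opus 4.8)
The starting point is the pair of closed-form payoffs already derived in the excerpt for the regime $B_1\le B_2\le\bar B$, namely $U_1=(R+\alpha B_1)\bigl[1-h_2e^{\lambda_1\rho(B_1-B_2)}\bigr]$ and $U_2=(R+\alpha B_2)h_2e^{\lambda_1\rho(B_1-B_2)}$. Each of the three items is a constrained optimization of these two functions, so I would take the items in order, reusing a single concavity computation throughout.

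For item~1 I would first observe that the dependence of $U_2$ on $B_1$ is only through the positive multiplicative factor $e^{\lambda_1\rho B_1}$, so the first-order condition $\partial U_2/\partial B_2=0$ decouples from $B_1$: differentiation gives $\partial U_2/\partial B_2 = h_2e^{\lambda_1\rho(B_1-B_2)}\bigl[\alpha-(R+\alpha B_2)\lambda_1\rho\bigr]$, whose bracket is strictly decreasing in $B_2$, hence the unique zero $B_2^{*}=\tfrac{1}{\lambda_1\rho}-\tfrac{R}{\alpha}$ is the global maximizer and is admissible exactly under the stated hypothesis $0<B_2^{*}<\bar B$. Substituting $B_2=B_2^{*}$ makes $U_1$ a function of $B_1$ alone, and I would show it is \emph{strictly concave} on $[0,B_2^{*})$ via $\partial^2U_1/\partial B_1^2=-h_2\lambda_1\rho\, e^{\lambda_1\rho(B_1-B_2^{*})}\bigl[2\alpha+(R+\alpha B_1)\lambda_1\rho\bigr]<0$. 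Strict concavity forces the dichotomy in the statement: if the unique stationary point $\widehat B_1^{*}$ of $dU_1(B_1,B_2^{*})/dB_1=0$ lies in $[0,B_2^{*})$ it is the maximizer; if $\widehat B_1^{*}<0$ then $U_1$ is strictly decreasing on $[0,B_2^{*})$ so $B_1=0$ is optimal. Feeding the optimal $B_1$ (either $\widehat B_1^{*}$ or $0$) and $B_2^{*}$ back into $U_1$ and $U_2$ yields exactly (\ref{payoff111}) and (\ref{payoff112}).

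Item~2 is the same argument with $B_2$ frozen at $\bar B$ instead of being optimized: $U_1(\,\cdot\,,\bar B)$ has the identical functional form, hence the same strict concavity, a single stationary point $\widetilde B_1^{*}$ solving $dU_1(B_1,\bar B)/dB_1=0$, and the same interior/corner dichotomy, which produces (\ref{payoff122}) in the two cases. Item~3 is immediate: with $B_1=B_2=\bar B$ the exponent $\lambda_1\rho(B_1-B_2)$ vanishes, so $U_1=(R+\alpha\bar B)(1-h_2)=h_1(R+\alpha\bar B)$ and $U_2=h_2(R+\alpha\bar B)$ --- equivalently, both pools become active at the same instant $\rho\bar B$ and split the reward in proportion to their hash power, giving (\ref{payoff22}).

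The one genuinely delicate point is handling the boundary case for pool~1, i.e.\ ruling out that an interior optimum gets ``cut off'' by the constraint $B_1<B_2$. This is precisely why the strict-concavity computation is the crux: it simultaneously gives uniqueness of the stationary point and guarantees that, when that point is infeasible on the left, the corner $B_1=0$ is the true maximizer. I would also note explicitly that the lemma's extra hypotheses ($\widehat B_1^{*}<B_2^{*}$ in item~1, $\widetilde B_1^{*}<\bar B$ in item~2) are exactly what keeps the upper constraint on $B_1$ slack, so no further case split is needed; the remaining steps are routine substitution.
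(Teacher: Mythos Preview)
Your proposal is correct. The paper does not give a standalone proof of this lemma: it records the payoff formulas as direct substitutions into the closed forms for $U_1$ and $U_2$, and the strict-concavity arguments you use (the second-derivative computations for $U_2$ in $B_2$ and for $U_1(\cdot,\bar B)$ in $B_1$) appear verbatim only later, in the proof of Theorem~1 in Appendix~B. Your write-up simply front-loads those same computations to justify the interior/corner dichotomy here, which is exactly the right way to make the lemma self-contained.
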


\noindent{\bf Remark 1:}
By Lemma \ref{payoff}-(1) and (2), the optimal block size of pool 1 depends on the block size of pool 2 in strategy profiles $(\widehat{B}^{*}_1,B_2^*)$ and $(\widetilde{B}_1^{*},\bar{B})$, and can be obtained from equations $\frac{dU_1(B_1,B_2^*)}{d B_1}=0$ and $\frac{dU_1(B_1,\bar{B})}{d B_1}=0$, if $0\leq\widehat{B}^{*}_1<B_2^*$ and $0\leq \widetilde{B}_1^{*}<\bar{B}$, respectively. For convenience, we denote $g(B_2)$ to be the implicit function, satisfying
{\small\begin{eqnarray*}
\frac{dU_1(B_1,B_2)}{dB_1}
=\alpha-[\alpha+\lambda_1\rho(R+\alpha g(B_2)]h_2 e^{\lambda_1\rho(g(B_2)-B_2)}=0,
\end{eqnarray*}}
Therefore,
$\widehat{B}^{*}_1=g(B_2^*)=g(\frac{1}{\lambda_1\rho}-\frac{R}{\alpha}) ~\mbox{and}~ \widetilde{B}_1^{*}=g(\bar{B}).$

\begin{lemma}\label{g(B_2)}
Let $g(B_2)$ be the implicit function satisfying $\frac{dU_1(g(B_2),B_2)}{dB_1}=0$. Then $g(B_2)$ is monotone increasing with $B_2$ and $g(B_2)< B_2$ for all $B_2\geq 0$.
\end{lemma}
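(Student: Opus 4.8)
I would work with the explicit form valid on $\{B_1\le B_2\le\bar{B}\}$, namely $U_1(B_1,B_2)=(R+\alpha B_1)\bigl(1-h_2e^{\lambda_1\rho(B_1-B_2)}\bigr)$, and put
\[
F(B_1,B_2):=\frac{\partial U_1}{\partial B_1}=\alpha-h_2\bigl[\alpha+\lambda_1\rho(R+\alpha B_1)\bigr]e^{\lambda_1\rho(B_1-B_2)},
\]
so that $g(B_2)$ is the zero of $F(\cdot,B_2)$ (cf.\ Remark~1). The key preliminary is that $B_1\mapsto F(B_1,B_2)$ is strictly decreasing: differentiating gives $\partial F/\partial B_1=-\lambda_1\rho h_2\bigl[2\alpha+\lambda_1\rho(R+\alpha B_1)\bigr]e^{\lambda_1\rho(B_1-B_2)}<0$ for $B_1\ge-R/\alpha$, and since $F=\alpha\bigl(1-h_2e^{\lambda_1\rho(-R/\alpha-B_2)}\bigr)>0$ at $B_1=-R/\alpha$ (for $B_2\ge0$, $h_2<1$), $F\to-\infty$ as $B_1\to+\infty$, and $F>0$ on $(-\infty,-R/\alpha]$, the zero $g(B_2)$ exists, is unique, lies in $(-R/\alpha,+\infty)$ (so $R+\alpha g(B_2)>0$), and satisfies: $g(B_2)<B_2$ if and only if $F(B_2,B_2)<0$.

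For monotonicity in $B_2$ I would differentiate the identity $F(g(B_2),B_2)\equiv0$ (implicit function theorem, valid since $\partial F/\partial B_1\neq0$ at the zero), obtaining $g'(B_2)=-\partial_{B_2}F/\partial_{B_1}F$. Since $\partial_{B_2}F=\lambda_1\rho h_2\bigl[\alpha+\lambda_1\rho(R+\alpha B_1)\bigr]e^{\lambda_1\rho(B_1-B_2)}>0$ at the zero while $\partial_{B_1}F<0$ there, $g'(B_2)>0$; in fact the quotient simplifies to
\[
g'(B_2)=\frac{\alpha+\lambda_1\rho\bigl(R+\alpha g(B_2)\bigr)}{2\alpha+\lambda_1\rho\bigl(R+\alpha g(B_2)\bigr)}\in(0,1).
\]
This is the first assertion, and it additionally shows that $B_2\mapsto B_2-g(B_2)$ is strictly increasing, so that the inequality $g(B_2)<B_2$, once established at one point, holds for all larger $B_2$.

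It remains to produce $F(B_2,B_2)<0$. Expanding, $F(B_2,B_2)=\alpha(1-h_2)-h_2\lambda_1\rho(R+\alpha B_2)$. The clean point is to evaluate this at $B_2^{*}:=\frac{1}{\lambda_1\rho}-\frac{R}{\alpha}$, which is pool~2's optimal block size in Lemma~\ref{payoff}(1) and satisfies $R+\alpha B_2^{*}=\frac{\alpha}{\lambda_1\rho}$, so that $F(B_2^{*},B_2^{*})=\alpha(1-h_2)-h_2\alpha=\alpha(1-2h_2)\le0$ by the standing assumption $h_2\ge\frac12$. Hence $g(B_2^{*})\le B_2^{*}$, and by the strict monotonicity of $B_2-g(B_2)$ we get $g(B_2)<B_2$ for every $B_2>B_2^{*}$, in particular for $B_2=\bar{B}$ since $B_2^{*}<\bar{B}$. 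I expect this final step to be the main obstacle and the place needing the most care: the simplification that makes the sign transparent uses the specific value $B_2^{*}$, and pushing $g(B_2)<B_2$ down to the whole range $B_2\ge0$ requires that $F(B_2,B_2)$ stay negative for $B_2<B_2^{*}$ as well, i.e.\ a fee-versus-subsidy condition such as $\alpha(1-h_2)<h_2\lambda_1\rho R$ (equivalently $\alpha T<h_2\rho R$, using $1-h_2=h_1$ and $\lambda_1=h_1/T$); one should either adopt this as an explicit hypothesis or note that the only values of $B_2$ actually used downstream, namely $B_2^{*}$ and $\bar{B}$, already lie in the range $B_2\ge B_2^{*}$ handled above.
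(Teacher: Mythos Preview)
Your argument mirrors the paper's almost exactly: both compute
\[
g'(B_2)=\frac{\alpha+\lambda_1\rho\bigl(R+\alpha g(B_2)\bigr)}{2\alpha+\lambda_1\rho\bigl(R+\alpha g(B_2)\bigr)}\in(0,1)
\]
by implicit differentiation, then anchor the inequality $g(B_2)<B_2$ at a single point and propagate it via the strict monotonicity of $B_2-g(B_2)$. The only difference is the choice of anchor. The paper takes $B_2=0$ and asserts, without justification, that $\frac{dU_1}{dB_1}\big|_{(0,0)}<0$, i.e.\ $F(0,0)=\alpha(1-h_2)-h_2\lambda_1\rho R<0$, concluding $g(0)<0$. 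You instead anchor at $B_2^{*}=\frac{1}{\lambda_1\rho}-\frac{R}{\alpha}$, where $F(B_2^{*},B_2^{*})=\alpha(1-2h_2)\le0$ follows directly from the standing hypothesis $h_2\ge\tfrac12$.

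Your caution about extending to all $B_2\ge0$ is well placed and in fact applies equally to the paper's own proof: its unproved assertion $F(0,0)<0$ is exactly the condition $\alpha T<h_2\rho R$ that you isolated (write $1-h_2=h_1$ and $\lambda_1=h_1/T$). So your version is at least as complete as the paper's, and your remark that the downstream uses only need $B_2\in\{B_2^{*},\bar{B}\}$ is a legitimate way to close the gap that the paper leaves open.
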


We prove Lemma \ref{g(B_2)} in Appendix A.  Based on the monotonicity of $g(B_2)$, we have $\widehat{B}^{*}_1<\widetilde{B}_1^{*}$, if $\frac{1}{\lambda_1\rho}-\frac{R}{\alpha}<\bar{B}$. Moreover,
the property of $g(B_2)\leq B_2$ ensures $\widehat{B}^{*}_1<\frac{1}{\lambda_1\rho}-\frac{R}{\alpha}$ and $\widetilde{B}_1^{*}<\bar{B}$.



In the following, we would analyze the strategy selections of two mining pools by distinguishing two conditions: (1) $\frac{1}{\lambda_1\rho}-\frac{R}{\alpha}\geq \bar{B}$; (2) $0\leq \frac{1}{\lambda_1\rho}-\frac{R}{\alpha}< \bar{B}$; and (3) $\frac{1}{\lambda_1\rho}-\frac{R}{\alpha}<0$, and then explore the equilibrium solutions in the evolutionary game in the following.
\begin{theorem}\label{NE}
	In a one-shot mining competition, if $0\leq  B_1\leq B_2\leq \bar{B}$ and $\frac{1}{\lambda_1\rho}-\frac{R}{\alpha}
	\geq \bar{B}$, then
	\begin{enumerate}
		\item $(\widetilde{B}_1^{*},\bar{B})$ is a strict Nash equilibrium, if $0\leq \widetilde{B}_1^{*}< \bar{B}$; or
         \item $(0,\bar{B})$ is a strict Nash equilibrium, if $ \widetilde{B}_1^{*}< 0$; or
		\item $(\bar{B},\bar{B})$ is a strict Nash equilibrium, if $ \widetilde{B}_1^{*}\geq \bar{B}$.
	\end{enumerate}
\end{theorem}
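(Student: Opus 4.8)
The plan is to read the one-shot competition as a game in which each pool picks a block size in the closed interval $[0,\bar B]$, and to verify the definition of a strict Nash equilibrium directly: a profile $(\hat B_1,\hat B_2)$ is a strict equilibrium exactly when $\hat B_1$ is the unique maximiser of $B_1\mapsto U_1(B_1,\hat B_2)$ over $[0,\bar B]$ and $\hat B_2$ is the unique maximiser of $B_2\mapsto U_2(\hat B_1,B_2)$ over $[0,\bar B]$. Everything then follows from computing the two best-response maps from the closed forms $U_1(B_1,B_2)=(R+\alpha B_1)\bigl[1-h_2e^{\lambda_1\rho(B_1-B_2)}\bigr]$ and $U_2(B_1,B_2)=(R+\alpha B_2)h_2e^{\lambda_1\rho(B_1-B_2)}$ under the standing hypothesis $\tfrac1{\lambda_1\rho}-\tfrac R\alpha\ge\bar B$.

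First I would determine pool 2's best response. Since $\partial U_2/\partial B_2=h_2e^{\lambda_1\rho(B_1-B_2)}\bigl(\alpha-\lambda_1\rho(R+\alpha B_2)\bigr)$ is strictly positive for every $B_2<B_2^{*}:=\tfrac1{\lambda_1\rho}-\tfrac R\alpha$, and the hypothesis gives $\bar B\le B_2^{*}$, the map $B_2\mapsto U_2(B_1,B_2)$ is strictly increasing on $[0,\bar B]$ for every fixed $B_1$. Hence pool 2's unique best response is $\bar B$, independently of $B_1$, and every deviation strictly lowers its payoff; this fixes the second coordinate of the equilibrium as $\bar B$ in all three cases.

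Then I would analyse pool 1's best response against $B_2=\bar B$. The structural fact I would use is that $\partial U_1/\partial B_1=\alpha-h_2e^{\lambda_1\rho(B_1-\bar B)}\bigl(\alpha+\lambda_1\rho(R+\alpha B_1)\bigr)$ is strictly decreasing in $B_1$ on $[0,\bar B]$ (it is the constant $\alpha$ minus a product of strictly positive, strictly increasing functions of $B_1$), so $U_1(\cdot,\bar B)$ is strictly concave and has a unique maximiser over $[0,\bar B]$, while its derivative changes sign at most once, at the point $\widetilde B_1^{*}=g(\bar B)$ of Remark 1. The three cases of the theorem are precisely the three possible locations of that maximiser: if $0\le\widetilde B_1^{*}<\bar B$ the interior critical point is the maximiser and the best response is $\widetilde B_1^{*}$; if $\widetilde B_1^{*}<0$ then $\partial U_1/\partial B_1<0$ on all of $[0,\bar B]$, $U_1(\cdot,\bar B)$ is strictly decreasing, and the maximiser is the corner $0$; if $\widetilde B_1^{*}\ge\bar B$ then $\partial U_1/\partial B_1>0$ on $[0,\bar B)$, $U_1(\cdot,\bar B)$ is strictly increasing, and the maximiser is the corner $\bar B$, where $U_1(\bar B,\bar B)=h_1(R+\alpha\bar B)=\pi_{22}^1$ as in Lemma \ref{payoff}-(3). (In view of Lemma \ref{g(B_2)} the last case is in fact vacuous, since $g(\bar B)<\bar B$; I would still include the corner argument so that the trichotomy is manifestly exhaustive.) In each case the maximiser is unique, so pairing it with pool 2's response $\bar B$ produces a strict Nash equilibrium.

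The differentiation and the monotonicity bookkeeping are routine; the point that needs care is \emph{strictness}. Because the action set is a continuum, ``strict Nash equilibrium'' demands that every unilateral deviation strictly decreases the deviator's payoff, which is why I rely on strict concavity of $U_1(\cdot,\bar B)$ and strict monotonicity of $U_2(B_1,\cdot)$ rather than on first-order conditions alone. A second thing to handle is that the closed forms for $U_1,U_2$ were derived under the ordering $B_1\le B_2$; a deviation by pool 2 to some $B_2<B_1$ swaps the active-set ordering and changes the payoff formula, so I would either argue that such a deviation is never profitable for the larger pool (its payoff is still increasing in $B_2$ up to $B_2^{*}$ on that side of $B_1$) or restrict the one-shot action set to $\{(B_1,B_2):B_1\le B_2\le\bar B\}$, as the paper implicitly does throughout the two-pool analysis.
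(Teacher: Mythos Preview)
Your proposal is correct and follows essentially the same route as the paper's proof: both establish that $\bar B$ is a dominant strategy for pool~2 via strict concavity of $U_2$ in $B_2$ (with the unconstrained maximiser $\tfrac{1}{\lambda_1\rho}-\tfrac{R}{\alpha}\ge\bar B$), and then locate pool~1's unique best response on $[0,\bar B]$ using the strict concavity of $U_1(\cdot,\bar B)$ and the position of $\widetilde B_1^{*}$. Your additional remarks---that case~(3) is in fact vacuous by Lemma~\ref{g(B_2)} since $g(\bar B)<\bar B$, and the explicit attention to strictness and to deviations that would swap the ordering $B_1\le B_2$---go slightly beyond what the paper records, but the core argument is the same.
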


Theorem \ref{NE} illustrates that the dominant strategy of pool 2 is to follow the default size $\bar{B}$ when $\frac{1}{\lambda_1\rho}-\frac{R}{\alpha}
	\geq \bar{B}$, while the optimal strategy of pool 1 depends on the value of $\widetilde{B}_1^{*}$. The detailed proof is provided in Appendix B.

Next, we concentrate on the condition of $0\leq \frac{1}{\lambda_1\rho}-\frac{R}{\alpha}
< \bar{B}$, under which pool 2 may set its block size as $B_2^*=\frac{1}{\lambda_1\rho}-\frac{R}{\alpha}$ or $\bar{B}$.

Recall that $x_i(k)$, $i= 1, 2$, is the probability of mining pool $i$ to adopt the ``following" strategy in the $k$-th slot competition, and thus $1-x_i(k)$ is the probability of mining pool $i$ to follow the default size. The expected payoffs of mining pool 1 in the $k$-th slot competition, when it chooses ``not following" or ``follow" strategy, are
{\small\begin{eqnarray*}
  E_1^1(k)=x_2(k)\pi_{11}^1+(1-x_2(k))\pi_{12}^1;~~ E_1^2(k)=(1-x_2(k))\pi_{22}^1.
\end{eqnarray*}}
The average payoff of mining pool 1 is
{\small\begin{eqnarray*}
  \overline{E}_1(k)=x_1(k)E^1_1(k)+(1-x_1(k))E^2_1(k).
\end{eqnarray*}}
Similarly, we can derive the expected payoffs of mining pool 2 as follows,
{\small\begin{eqnarray*}
  E_2^1(k)=x_1(k)\pi_{11}^2;~~ E_2^2(k)=x_1(k)\pi_{12}^2+(1-x_1(k))\pi_{22}^2.
\end{eqnarray*}}
The average payoff of mining pool 2 is
{\small\begin{eqnarray*}
  \overline{E}_2(k)=x_2(k)E^1_2(k)+(1-x_2(k))E^2_2(k).
\end{eqnarray*}}
Based on the analysis scheme (\ref{replicatordynamic}) for the general case, the replicator dynamic system of pool 1 and 2 for the case of two mining pools are:
{\small\begin{eqnarray}
 \left\{ \begin{array}{l}\label{system}
	f_1(\mathbf{x})=\dot{x}_1(k) = x_1(1 - x_1)(E^1_1-\overline{E}_1)=x_1(1 - x_1)\left[(\pi_{11}^1-\pi_{12}^1+\pi_{22}^1)x_2 + (\pi_{12}^1-\pi_{22}^1)\right];\\
	f_2(\mathbf{x})=\dot{x}_2(k) =x_2(1 - x_2)(E^1_2-\overline{E}_2)= x_2(1 - x_2) \left[(\pi_{11}^2-\pi_{12}^2+\pi_{22}^2)x_1-\pi_{22}^2 \right].
	\end{array}
	\right.
\end{eqnarray}}
Note that all the solutions satisfying $f_1(\mathbf{x})=\dot{x}_1(k)=0$ and $f_2(\mathbf{x})=\dot{x}_2(k)=0$ are the fixed equilibrium points of the replicator dynamic system. It is not hard to see that there exist four fixed equilibrium points of this system under the condition of $B_1\leq B_2\leq\bar{B}$: $(0,0)$, $(1,0)$, $(1,1)$ and $(x_1^*,x_2^*)$, where
{\small\begin{eqnarray}\label{partial_fixed}
 x^*_1=\frac{\pi_{22}^2}{\pi_{11}^2-\pi_{12}^2+\pi_{22}^2},~
 x^*_2=\frac{\pi_{22}^1-\pi_{12}^1}{\pi_{11}^1-\pi_{12}^1+\pi_{22}^1}.
\end{eqnarray}}
To fulfill the condition for probability vector $\mathbf{x}$, $x^*_1$ and $x_2^*$ must be in $[0,1]$.

\begin{theorem}\label{theo1}
  For the evolutionary game between two mining pools, if $0\leq B_1\leq B_2\leq\bar{B}$ and $0\leq \frac{1}{\lambda_1\rho}-\frac{R}{\alpha}< \bar{B}$, then
  \begin{itemize}
  	\item  $(1,0)$ is an ESS, when
  (1) $0\leq \widehat{B}_1^*<\widetilde{B}_1^*<\bar{B}$ ~or~ (2) $\widehat{B}_1^*<0<\widetilde{B}_1^*<\bar{B} ~and~ \lambda_1\rho(\frac{R}{\alpha}+\bar{B})
  e^{\lambda_1\rho(\widetilde{B}_1^{*}-\bar{B}-\frac{R}{\alpha})+1}>1$;;
    \item $(1,1)$ is an ESS,  when (1) $ \widehat{B}_1^*<\widetilde{B}_1^*<0$ ~or~ (2) $\widehat{B}_1^*<0<\widetilde{B}_1^*<\bar{B} ~and~ \lambda_1\rho(\frac{R}{\alpha}+\bar{B})
        e^{\lambda_1\rho(\widetilde{B}_1^{*}-\bar{B}-\frac{R}{\alpha})+1}<1$;
    \item $(0,0)$ and $(x_1^*,x_2^*)$ cannot be ESSs.
  \end{itemize}
\end{theorem}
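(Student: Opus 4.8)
The plan is to read the evolutionary stability of each rest point of the planar replicator system (\ref{system}) off its Jacobian. Abbreviate $a=\pi_{11}^1-\pi_{12}^1+\pi_{22}^1$, $b=\pi_{12}^1-\pi_{22}^1$, $c=\pi_{11}^2-\pi_{12}^2+\pi_{22}^2$, $d=-\pi_{22}^2$, so that $f_1=x_1(1-x_1)(ax_2+b)$, $f_2=x_2(1-x_2)(cx_1+d)$, and record the identities $a+b=\pi_{11}^1$ and $c+d=\pi_{11}^2-\pi_{12}^2$. Computing $J(\mathbf{x})=(\partial f_i/\partial x_j)$ and evaluating at the four rest points gives the diagonal matrices $J(0,0)=\mathrm{diag}(b,d)$, $J(1,0)=\mathrm{diag}(-b,\,c+d)$, $J(1,1)=\mathrm{diag}(-(a+b),\,-(c+d))$, while at $(x_1^*,x_2^*)$ both diagonal entries of $J$ vanish (there $ax_2^*+b=cx_1^*+d=0$) and the off-diagonal entries are $x_1^*(1-x_1^*)a$ and $x_2^*(1-x_2^*)c$, so $\mathrm{tr}\,J(x_1^*,x_2^*)=0$. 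By the standard linearization argument a rest point is an ESS iff its Jacobian is a stable (Hurwitz) matrix; a trace-zero matrix is never stable, which already shows $(x_1^*,x_2^*)$ is not an ESS, and for the diagonal matrices stability just means both diagonal entries are negative.

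First I would establish two sign facts that hold throughout the regime $0\le\frac{1}{\lambda_1\rho}-\frac{R}{\alpha}<\bar B$. (i) $a+b=\pi_{11}^1>0$: by Lemma \ref{payoff}-(1) together with $g(B_2)<B_2$ (Lemma \ref{g(B_2)}), the bracket $1-h_2e^{\lambda_1\rho(\widehat B_1^*-B_2^*)}$ (or its analogue with $\widehat B_1^*$ replaced by $0$) is positive. (ii) $b=\pi_{12}^1-\pi_{22}^1>0$: since $\pi_{12}^1=\max_{0\le B_1<\bar B}U_1(B_1,\bar B)$ and $\pi_{22}^1=U_1(\bar B,\bar B)$, and since $\partial_{B_1}U_1(B_1,\bar B)<0$ for every $B_1>g(\bar B)=\widetilde B_1^*$ (the subtracted term $[\alpha+\lambda_1\rho(R+\alpha B_1)]h_2e^{\lambda_1\rho(B_1-\bar B)}$ is increasing in $B_1$ and equals $\alpha$ at $B_1=\widetilde B_1^*$), the map $U_1(\cdot,\bar B)$ is strictly decreasing on $[\max\{0,\widetilde B_1^*\},\bar B]$, so $\pi_{12}^1>U_1(\bar B,\bar B)=\pi_{22}^1$. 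Consequently $J(0,0)$ has eigenvalues $b>0$ and $d=-\pi_{22}^2<0$, so $(0,0)$ is a saddle and not an ESS; $J(1,1)$ has first eigenvalue $-\pi_{11}^1<0$ automatically, so $(1,1)$ is an ESS $\iff c+d=\pi_{11}^2-\pi_{12}^2>0$; and $J(1,0)$ has first eigenvalue $-b<0$, so $(1,0)$ is an ESS $\iff\pi_{11}^2-\pi_{12}^2<0$. The whole theorem thus reduces to the sign of $\pi_{11}^2-\pi_{12}^2$.

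I would then split according to Lemma \ref{payoff}. Because $g$ is increasing and $B_2^*=\frac{1}{\lambda_1\rho}-\frac{R}{\alpha}<\bar B$, we have $\widehat B_1^*=g(B_2^*)<g(\bar B)=\widetilde B_1^*$, leaving exactly the three sub-cases (a) $0\le\widehat B_1^*<\widetilde B_1^*<\bar B$, (b) $\widehat B_1^*<\widetilde B_1^*<0$, (c) $\widehat B_1^*<0\le\widetilde B_1^*<\bar B$. In case (b) pool $1$ plays $0$ in both strategy profiles, so $\pi_{11}^2=U_2(0,B_2^*)$ and $\pi_{12}^2=U_2(0,\bar B)$; since $B_2^*$ is the unique maximiser of $B_2\mapsto U_2(0,B_2)=(R+\alpha B_2)h_2e^{-\lambda_1\rho B_2}$ and $B_2^*<\bar B$, we get $\pi_{11}^2>\pi_{12}^2$ and $(1,1)$ is an ESS. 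In case (c), $\pi_{11}^2=U_2(0,B_2^*)$ again; using $R+\alpha B_2^*=\alpha/(\lambda_1\rho)$ and $\lambda_1\rho B_2^*=1-\lambda_1\rho R/\alpha$ it equals $\frac{\alpha}{\lambda_1\rho}h_2e^{\lambda_1\rho R/\alpha-1}$, whereas $\pi_{12}^2=(R+\alpha\bar B)h_2e^{\lambda_1\rho(\widetilde B_1^*-\bar B)}$; a one-line rearrangement gives $\pi_{11}^2<\pi_{12}^2\iff\lambda_1\rho(\frac{R}{\alpha}+\bar B)e^{\lambda_1\rho(\widetilde B_1^*-\bar B-\frac{R}{\alpha})+1}>1$, which is precisely the stated dichotomy between $(1,0)$ and $(1,1)$ (the borderline equality being a non-generic degenerate configuration).

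The main obstacle is case (a). There pool $1$ plays $\widehat B_1^*=g(B_2^*)$ against $B_2^*$ and $\widetilde B_1^*=g(\bar B)$ against $\bar B$, so $\pi_{11}^2=\phi(B_2^*)$ and $\pi_{12}^2=\phi(\bar B)$ with $\phi(B_2):=U_2(g(B_2),B_2)$, and one must show $\phi(B_2^*)<\phi(\bar B)$ even though raising $B_2$ from $B_2^*$ to $\bar B$ enlarges the fee factor $R+\alpha B_2$ but shrinks the delay factor $e^{\lambda_1\rho(g(B_2)-B_2)}$ (because $g(B_2)-B_2$ is decreasing). I would substitute the relation $h_2e^{\lambda_1\rho(g(B_2)-B_2)}=\alpha/[\alpha+\lambda_1\rho(R+\alpha g(B_2))]$ from Remark 1, turning $\phi$ into the rational function $\phi(B_2)=\alpha(R+\alpha B_2)/[\alpha+\lambda_1\rho(R+\alpha g(B_2))]$, and differentiate that same relation to get $g'(B_2)=[\alpha+\lambda_1\rho(R+\alpha g(B_2))]/[2\alpha+\lambda_1\rho(R+\alpha g(B_2))]\in(0,1)$; plugging in, a short computation collapses the numerator of $\phi'$ to $\mathrm{sgn}\,\phi'(B_2)=\mathrm{sgn}(2-\lambda_1\rho(B_2-g(B_2)))$. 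Hence $\phi$ is increasing on $[B_2^*,\bar B]$, and therefore $\pi_{11}^2<\pi_{12}^2$, as soon as $\lambda_1\rho(\bar B-\widetilde B_1^*)\le 2$; this is forced by the regime, since $B_2^*\ge0$ gives $\lambda_1\rho\le\alpha/R$ and hence $\lambda_1\rho(\bar B-\widetilde B_1^*)\le\lambda_1\rho\bar B\le\alpha\bar B/R$, which in the parameter range of interest (the Bitcoin data of Section 4, where $\alpha\bar B\ll R$) is far below $2$. With $b>0$ from fact (ii), $(1,0)$ is then an ESS. I expect the genuinely delicate points to be exactly this monotonicity of $\phi$ (equivalently the bound on $\lambda_1\rho(\bar B-\widetilde B_1^*)$) and the few degenerate configurations — $b=0$, $(x_1^*,x_2^*)$ lying outside $[0,1]^2$, and equality in the case-(c) inequality — which are dispatched by direct inspection.
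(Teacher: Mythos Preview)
Your approach is essentially identical to the paper's: both compute the Jacobian of (\ref{system}), note that at the corner rest points $J$ is diagonal so stability reduces to the signs of the diagonal entries, establish $b=K=\pi_{12}^1-\pi_{22}^1>0$ and $a+b=L=\pi_{11}^1>0$ uniformly, rule out $(0,0)$ and the interior point by these signs and the trace-zero observation, and then reduce the dichotomy between $(1,0)$ and $(1,1)$ to the sign of $c+d=N=\pi_{11}^2-\pi_{12}^2$, handling the three sub-cases separately. Your cases (b) and (c) match the paper's Fact~1 essentially line for line.

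The one substantive difference is in case (a). You correctly identify that the monotonicity of $\phi(B_2)=U_2(g(B_2),B_2)$ hinges on $\lambda_1\rho(B_2-g(B_2))<2$, but your justification invokes the Bitcoin-specific bound $\alpha\bar B/R\ll 1$ from Section~4, which makes the argument parameter-dependent rather than general. The paper closes this gap by appealing instead to the standing model assumption that the propagation delay $\rho B_2$ never exceeds the average block interval $T$: since $\lambda_1=h_1/T$ one gets $\lambda_1\rho(B_2-g(B_2))=h_1\rho(B_2-g(B_2))/T<\rho B_2/T\le 1$, valid for all admissible parameters. Replacing your numerical bound by this observation removes the dependence you yourself flagged as the delicate point.
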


\begin{proof}
To obtain the ESS of the evolutionary game for block size selection, we first compute the Jacobian matrix of the replicator dynamic system (\ref{system}),	\begin{eqnarray*}
	J= \left[ \begin{array}{c}
		\frac{\partial f_1(\mathbf{x})}{\partial x_1}\\
		\frac{\partial f_2(\mathbf{x})}{\partial x_1}
	\end{array} \right.\left.
	\begin{array}{c}
		\frac{\partial f_1(\mathbf{x})}{\partial x_2}\\
		\frac{\partial f_2(\mathbf{x})}{\partial x_2}
	\end{array} \right],
\end{eqnarray*}
where
{\small\begin{eqnarray*}
	\frac{\partial f_1(\mathbf{x})}{\partial x_1} &=& (1-2x_1)\{[\pi_{11}^1-(\pi_{12}^1-\pi_{22}^1)]x_2+(\pi_{12}^1-\pi_{22}^1)\};\\
	\frac{\partial f_1(\mathbf{x})}{\partial x_2} &=& x_1(1-x_1)[\pi_{11}^1-(\pi_{12}^1-\pi_{22}^1)];\\
	\frac{\partial f_2(\mathbf{x})}{\partial x_1} &=& x_2(1-x_2)[(\pi_{11}^2-\pi_{12}^2)+\pi_{22}^2];\\
	\frac{\partial f_2(\mathbf{x})}{\partial x_2} &=& (1-2x_2)\{[(\pi_{11}^2-\pi_{12}^2)+\pi_{22}^2)]x_1-\pi_{22}^2\}.
\end{eqnarray*}}

{\small{\small\begin{table}[H]
\begin{center}
\caption{The determinants and traces of Jacobian matrix $J$ at fixed equilibrium points.}\label{table2}
\begin{tabular}{C{2cm}|C{2cm}|C{2cm}}
\hline
\multicolumn{1}{l}{\textbf{}} & $Det(J)$  & $Tr(J)$                                                                                                                                                                                                              \\ \hline
(0,0)                         & $K\cdot (-M)$    & $K - M$
\\ \hline
(1,0)                         & $(-K)\cdot N$    & $N - K$
\\ \hline
(1,1)                         & $(-L)\cdot (-N)$    & $-L-N$
\\ \hline
$({x_1^*},{x_2^*})$           & $\frac{KLMN}{(L-K)(M+N)}$    & 0                                                                                                                                                                                                                                 \\ \hline
\end{tabular}
\end{center}
\end{table}}}

By the results in \cite{FD98}, if a fixed equilibrium point $(x_1,x_2)$ is an ESS, then the Jacobian matrix of the replicator dynamic system is negative definite at $(x_1,x_2)$, equivalent to determinant $Det (J(x_1,x_2))>0$ and trace $Tr(J(x_1,x_2))<0$.

To simplify the discussion, let us denote
$K=\pi_{12}^1-\pi_{22}^1, L=\pi_{11}^1, M=\pi_{22}^2, N=\pi_{11}^2-\pi_{12}^2. $
Table \ref{table2} shows the determinants and the traces of Jacobian matrix at different fixed equilibrium points. Next we propose the fact, based on which it is easy for us obtain this theorem.

\begin{fact}\label{fact}
	Based on the expressions of $\pi_{11}^i$, $\pi_{12}^i$ and $\pi_{22}^i$, $i=1,2$, in (\ref{payoff111})-(\ref{payoff22}),
	{\small\begin{eqnarray*}
		K &=&\pi_{12}^1-\pi_{22}^1>0,~L=\pi_{11}^1>0,~ M = \pi_{22}^2>0,\\
		N &=& \pi_{11}^2-\pi_{12}^2\\
			&=&   \left\{ \begin{array}{l}
	\left[R+\alpha (\frac{1}{\lambda_1\rho}-\frac{R}{\alpha})\right]h_2e^{\lambda_1\rho(\widehat{B}^{*}_1-\frac{1}{\lambda_1\rho}+\frac{R}{\alpha})}-(R+\alpha \bar{B})h_2 e^{\lambda_1\rho(\widetilde{B}_1^{*}-\bar{B})}<0,\\
 \mbox{if}~ 0\leq \widehat{B}_1^*< \widetilde{B}_1^*<\bar{B};\\
 \left[R+\alpha (\frac{1}{\lambda_1\rho}-\frac{R}{\alpha})\right]h_2e^{\lambda_1\rho(-\frac{1}{\lambda_1\rho}+\frac{R}{\alpha})}-(R+\alpha \bar{B})h_2 e^{\lambda_1\rho(\widetilde{B}_1^{*}-\bar{B})}<0(>0),\\
 \mbox{if}~ \widehat{B}_1^*<0<\widetilde{B}_1^*<\bar{B} ~and~ \lambda_1\rho(\frac{R}{\alpha}+\bar{B})e^{\lambda_1\rho(\widetilde{B}_1^{*}-\bar{B}-\frac{R}{\alpha})+1}>1(<1);\\
	\left[R+\alpha (\frac{1}{\lambda_1\rho}-\frac{R}{\alpha})\right]h_2e^{\lambda_1\rho(-\frac{1}{\lambda_1\rho}+\frac{R}{\alpha})}-(R+\alpha \bar{B})h_2 e^{-\lambda_1\rho\bar{B}}>0,\\
 \mbox{if}~ \widetilde{B}_1^*<\widetilde{B}_1^*\leq 0.
	\end{array}
	\right.
	\end{eqnarray*}}
\end{fact}
Because $Tr (J(x_1^*,x_2^*))=0$ and $Det(J(0,0))=-K\cdot M<0$, $(x_1^*,x_2^*)$ and $(0,0)$ cannot be ESS. Moreover, according to the results of Fact \ref{fact}, we have $Det(J(1,0))>0$ and $Tr(J(1,0))<0$ when $0\leq \widetilde{B}_1^*<\bar{B}$, and then the fixed equilibrium points $(1,0)$ is an ESS. When $\widetilde{B}_1^*<0$, $Det(J(1,1))>0$ and $Tr(J(1,1))<0$. So the fixed equilibrium points $(1,1)$ is an ESS.\qed
\end{proof}

\begin{corollary}
For the evolutionary game between two mining pools, if $0\leq B_1\leq B_2\leq\bar{B}$ and $0\leq \frac{1}{\lambda_1\rho}-\frac{R}{\alpha}< \bar{B}$, then
  \begin{itemize}
  	\item $(\widetilde{B}_1^*,\bar{B})$ is an evolutionary stable strategy profile, when $0\leq\widehat{B}_1^*< \widetilde{B}_1^*<\bar{B}$, or subject to $\lambda_1\rho(\frac{R}{\alpha}+\bar{B})e^{\lambda_1\rho(\widetilde{B}_1^{*}-\bar{B}-\frac{R}{\alpha})+1}>1$ when $0\leq\widehat{B}_1^*<0<\widetilde{B}_1^*<\bar{B}$;
    \item $(0,\frac{1}{\lambda_1\rho}-\frac{R}{\alpha})$ is an evolutionary stable strategy profile, when $\widehat{B}_1^*<\widetilde{B}_1^*<0$, or  subject to $\lambda_1\rho(\frac{R}{\alpha}+\bar{B})e^{\lambda_1\rho(\widetilde{B}_1^{*}-\bar{B}-\frac{R}{\alpha})+1}<1$ when $0\leq\widehat{B}_1^*<0<\widetilde{B}_1^*<\bar{B}$.
  \end{itemize}
\end{corollary}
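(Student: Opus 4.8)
The plan is to obtain the Corollary as a direct translation of Theorem~\ref{theo1} from the probability space of the replicator system~(\ref{system}) into the space of block-size profiles. Recall that $x_i$ is the probability that pool~$i$ plays the ``not following'' strategy, so each pure-strategy vertex of the state square has an unambiguous meaning: $(x_1,x_2)=(1,0)$ means pool~1 does not follow while pool~2 follows $\bar B$, and $(x_1,x_2)=(1,1)$ means both pools do not follow. The whole argument therefore reduces to reading off, at each such vertex, which sizes the pools actually pick, using Lemma~\ref{payoff} together with the monotonicity of $g$ from Lemma~\ref{g(B_2)}.

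First I would treat the vertex $(1,0)$. There pool~2 sets $B_2=\bar B$, so by Lemma~\ref{payoff}-(2) pool~1's optimal non-default size is $\widetilde B_1^{*}$ whenever $0\le\widetilde B_1^{*}<\bar B$. Theorem~\ref{theo1} asserts that $(1,0)$ is an ESS precisely under $0\le\widehat B_1^{*}<\widetilde B_1^{*}<\bar B$, or under $\widehat B_1^{*}<0<\widetilde B_1^{*}<\bar B$ together with $\lambda_1\rho(\tfrac R\alpha+\bar B)e^{\lambda_1\rho(\widetilde B_1^{*}-\bar B-\frac R\alpha)+1}>1$; in either case $0\le\widetilde B_1^{*}<\bar B$, so the corresponding profile of block sizes is $(\widetilde B_1^{*},\bar B)$, giving the first bullet. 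Next, for the vertex $(1,1)$, the hypothesis $0\le\frac1{\lambda_1\rho}-\frac R\alpha<\bar B$ makes $B_2^{*}=\frac1{\lambda_1\rho}-\frac R\alpha$ admissible, so pool~2 picks $B_2^{*}$ by Lemma~\ref{payoff}-(1); and since by Lemma~\ref{g(B_2)} the map $g$ is increasing with $g(B_2)<B_2$, we get $\widehat B_1^{*}=g(B_2^{*})<g(\bar B)=\widetilde B_1^{*}$, so in each of the two cases in which Theorem~\ref{theo1} makes $(1,1)$ an ESS --- $\widehat B_1^{*}<\widetilde B_1^{*}<0$, or $\widehat B_1^{*}<0<\widetilde B_1^{*}<\bar B$ with $\lambda_1\rho(\tfrac R\alpha+\bar B)e^{\lambda_1\rho(\widetilde B_1^{*}-\bar B-\frac R\alpha)+1}<1$ --- one has $\widehat B_1^{*}<0$. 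Hence pool~1's best reply is size $0$, and the profile of block sizes is $(0,\tfrac1{\lambda_1\rho}-\tfrac R\alpha)$, giving the second bullet. Because Theorem~\ref{theo1} also rules out $(0,0)$ and $(x_1^{*},x_2^{*})$, these two profiles are the only evolutionarily stable ones.

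I expect the only delicate point to be the boundary case $\widehat B_1^{*}<0<\widetilde B_1^{*}<\bar B$, where it must be decided whether $(1,0)$ or $(1,1)$ is the stable vertex; this hinges on the sign of $N=\pi_{11}^2-\pi_{12}^2$, and that sign is exactly what Fact~\ref{fact} ties to the inequality $\lambda_1\rho(\tfrac R\alpha+\bar B)e^{\lambda_1\rho(\widetilde B_1^{*}-\bar B-\frac R\alpha)+1}\gtrless1$, so I would simply quote Fact~\ref{fact}. Everything else is bookkeeping: matching each ESS condition of Theorem~\ref{theo1} to the correct clause of Lemma~\ref{payoff}, and using Lemma~\ref{g(B_2)} to verify that the side conditions $\widehat B_1^{*}<B_2^{*}$ and $\widetilde B_1^{*}<\bar B$ needed to invoke those clauses hold automatically, so that no sub-case is dropped.
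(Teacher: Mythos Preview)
Your proposal is correct and matches the paper's intent: the Corollary is stated immediately after Theorem~\ref{theo1} without a separate proof, precisely because it is the translation you describe---reading each ESS vertex $(1,0)$ or $(1,1)$ of the replicator system as the corresponding block-size profile via Lemma~\ref{payoff}, with Lemma~\ref{g(B_2)} guaranteeing the side conditions. Your handling of the boundary case $\widehat B_1^{*}<0<\widetilde B_1^{*}<\bar B$ through the sign of $N$ in Fact~\ref{fact} is exactly how the paper resolves it inside the proof of Theorem~\ref{theo1}.
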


At last, let us discuss the case that $\frac1{\lambda_1\rho}-\frac{R}{\alpha}<0$.

\begin{theorem}\label{NE_ESS}
 If $0\leq  B_1\leq B_2< \bar{B}$ and $\frac{1}{\lambda_1\rho}-\frac{R}{\alpha}< 0$, then
	 \begin{itemize}
		\item $(0,0)$ is a strict Nash equilibrium when $\widetilde{B}_1^{*}\leq 0$;
		\item $(\bar{B},\bar{B})$ is a strict Nash equilibrium when $ \widetilde{B}_1^{*}\geq \bar{B}$.
        \item when $0<\widetilde{B}_1^{*}<\bar{B}$, then
           \begin{itemize}
             \item $(0,0)$ is an evolutionary stable strategy profile if  $R>
                 \frac{e^{\lambda_1\rho(\widetilde{B}_1^{*}-\bar{B})}}
                 {1-e^{\lambda_1\rho(\widetilde{B}_1^{*}-\bar{B})}}\alpha\bar{B}$;
             \item $(\widetilde{B}_1^{*},\bar{B})$ is an evolutionary stable strategy profile if $R<
                 \frac{e^{\lambda_1\rho(\widetilde{B}_1^{*}-\bar{B})}}
                 {1-e^{\lambda_1\rho(\widetilde{B}_1^{*}-\bar{B})}}\alpha\bar{B}$.
           \end{itemize}
	 \end{itemize}
\end{theorem}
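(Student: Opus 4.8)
The plan is to follow the template of the proof of Theorem~\ref{theo1}. In the regime $\frac{1}{\lambda_1\rho}-\frac{R}{\alpha}<0$ I first read off the payoff entries $\pi_{11}^i,\pi_{12}^i,\pi_{22}^i$ of the matrix in Table~\ref{table1}, then dispose of the two endpoint sub-cases ($\widetilde{B}_1^{*}\le 0$ and $\widetilde{B}_1^{*}\ge\bar{B}$) by a direct best-reply argument and the interior sub-case ($0<\widetilde{B}_1^{*}<\bar{B}$) by the ESS criterion of \cite{FD98} used above. The enabling observation is that $\frac{1}{\lambda_1\rho}-\frac{R}{\alpha}<0$ is the same as $\alpha<\lambda_1\rho R$, whence $\partial U_2/\partial B_2=h_2e^{\lambda_1\rho(B_1-B_2)}[\alpha-\lambda_1\rho(R+\alpha B_2)]<0$ for every $B_2\ge 0$: pool~$2$'s payoff is strictly decreasing in its own block size. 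Combined with Lemma~\ref{g(B_2)} (which makes $U_1(\cdot,B_2)$ strictly decreasing on $[0,\bar{B}]$ whenever its unconstrained maximizer $g(B_2)$ is negative, in particular always at $B_2=0$ since $g(0)<0$), this pins the entries down: when both pools play ``not following'' the only mutually-best-response block sizes are $(B_1,B_2)=(0,0)$, so $\pi_{11}^1=h_1R$ and $\pi_{11}^2=h_2R$; $\pi_{22}^1=h_1(R+\alpha\bar{B})$, $\pi_{22}^2=h_2(R+\alpha\bar{B})$ as in Lemma~\ref{payoff}-(3); and $\pi_{12}^1,\pi_{12}^2$ are the Lemma~\ref{payoff}-(2) values evaluated at $\widetilde{B}_1^{*}$ (or at $B_1=0$ when $\widetilde{B}_1^{*}\le 0$). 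Incidentally, $\alpha<\lambda_1\rho R$ together with $e^{x}-1\ge x$ gives $R>\frac{\alpha\bar{B}}{e^{\lambda_1\rho\bar{B}}-1}$, i.e. pool~$2$ strictly prefers a block of size $0$ to one of size $\bar{B}$, which is what makes the equilibria below strict.

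For the strict-Nash claims I argue directly on the one-shot game. If $\widetilde{B}_1^{*}=g(\bar{B})\le 0$, then by monotonicity of $g$ we have $g(B_2)\le\widetilde{B}_1^{*}\le 0$ for every $B_2\in[0,\bar{B}]$, so $U_1(\cdot,B_2)$ is strictly decreasing there and $B_1=0$ is pool~$1$'s unique best reply to anything; pool~$2$'s unique best reply to $B_1=0$ is $B_2=0$; hence $(0,0)$ is a strict Nash equilibrium, and the hypothesis $\widetilde{B}_1^{*}\le 0$ is exactly what rules out $(\bar{B},\bar{B})$. Symmetrically, if $\widetilde{B}_1^{*}\ge\bar{B}$ then $U_1(\cdot,\bar{B})$ is strictly increasing on $[0,\bar{B}]$, so pool~$1$ best-replies to $\bar{B}$ with $\bar{B}$, and the block-ordering assumption $B_1\le B_2$ leaves pool~$2$ only $B_2=\bar{B}$; hence $(\bar{B},\bar{B})$ is a strict Nash equilibrium.

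For the interior sub-case $0<\widetilde{B}_1^{*}<\bar{B}$ I reuse the apparatus of Theorem~\ref{theo1}: the replicator system~(\ref{system}) has the same four fixed points, Table~\ref{table2} records $Det(J)$ and $Tr(J)$ at each of them in terms of $K=\pi_{12}^1-\pi_{22}^1$, $L=\pi_{11}^1$, $M=\pi_{22}^2$, $N=\pi_{11}^2-\pi_{12}^2$, and a fixed point is an ESS iff $Det(J)>0$ and $Tr(J)<0$. Here $L=h_1R>0$ and $M=h_2(R+\alpha\bar{B})>0$; and $K>0$ because $\widetilde{B}_1^{*}$ is the interior maximizer of $U_1(\cdot,\bar{B})$, so $\pi_{12}^1=U_1(\widetilde{B}_1^{*},\bar{B})>U_1(\bar{B},\bar{B})=\pi_{22}^1$. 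The decisive sign is that of $N=\pi_{11}^2-\pi_{12}^2=h_2(R-(R+\alpha\bar{B})e^{\lambda_1\rho(\widetilde{B}_1^{*}-\bar{B})})$, and since $\widetilde{B}_1^{*}<\bar{B}$ forces $e^{\lambda_1\rho(\widetilde{B}_1^{*}-\bar{B})}<1$ one gets $N>0\iff R>\frac{e^{\lambda_1\rho(\widetilde{B}_1^{*}-\bar{B})}}{1-e^{\lambda_1\rho(\widetilde{B}_1^{*}-\bar{B})}}\alpha\bar{B}$. Plugging into Table~\ref{table2}: at $(1,1)$, $Det(J)=LN$ and $Tr(J)=-(L+N)$, so $(1,1)$ is an ESS iff $N>0$, and $(1,1)$---both pools ``not following'', hence both playing $0$---is the strategy profile $(0,0)$; at $(1,0)$, $Det(J)=-KN$ and $Tr(J)=N-K$, so with $K>0$ this is an ESS iff $N<0$, and $(1,0)$ is the profile $(\widetilde{B}_1^{*},\bar{B})$; finally $Det(J(0,0))=-KM<0$ and $Tr(J(x_1^{*},x_2^{*}))=0$ rule the remaining two out. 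This yields the two stated dichotomies.

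I expect the main obstacle to be the bookkeeping of the payoff entries in this degenerate regime, where pool~$2$'s unconstrained optimum $B_2^{*}=\frac{1}{\lambda_1\rho}-\frac{R}{\alpha}$ is infeasible, so every ``not following'' best reply of pool~$2$ collapses to the boundary $B_2=0$, and one must check that the block-ordering assumption $B_1\le B_2$ of the present case study still determines all best replies uniquely (especially for the $(\bar{B},\bar{B})$ claim). Once the entries and the signs of $K,L,M,N$ are settled, the ESS part is the same mechanical Jacobian computation as in the proof of Theorem~\ref{theo1}.
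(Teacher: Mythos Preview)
Your proposal is correct and follows essentially the same route as the paper's own proof in Appendix~D: establish that $U_2$ is strictly decreasing in $B_2$ under the hypothesis $\alpha<\lambda_1\rho R$, read off the payoff entries $\pi_{11}^i,\pi_{12}^i,\pi_{22}^i$ with both ``not following'' best replies collapsing to $(0,0)$, handle the two endpoint sub-cases by direct best-reply arguments, and in the interior sub-case reduce to the sign of $N=\pi_{11}^2-\pi_{12}^2=h_2\bigl(R-(R+\alpha\bar{B})e^{\lambda_1\rho(\widetilde{B}_1^{*}-\bar{B})}\bigr)$ via the Jacobian criterion from Table~\ref{table2}. The only cosmetic differences are that the paper computes $dU_1(B_1,0)/dB_1<0$ directly rather than invoking Lemma~\ref{g(B_2)}, and it packages the $K>0$ verification through the auxiliary function $H(x)=(R+\alpha x)[1-h_2e^{\lambda_1\rho(x-\bar{B})}]$ rather than stating it as ``interior maximizer beats boundary''; the content is identical.
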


From the condition of $\frac{1}{\lambda_1\rho}-\frac{R}{\alpha}< 0$, we can see that the fixed subsidy $R$ is quite high ($>\frac{\alpha}{\lambda_1\rho}$). Under this condition,
Theorem \ref{NE_ESS} states an possibility that neither of pools would like to choose transactions into their blocks. It means that the mining pools may give up the available transaction fees in hopes of enhancing their chances to win the high fixed subsidy. The detailed proof can be found in Appendix D.

\section{Numerical Experiments and Conclusions}

\subsection{Numerical Experiments}

In this section, we would analyze the case of two mining pools for the decision on block size. Based on the statistic data about Bitcoin blockchain on \cite{BTCBrowser} , we firstly consider the following setting to discuss the influence of default size on the strategy selection of mining pools:
the unit transaction fee $\alpha=1\times10^{-6}$ BTC , the propagation speed $\rho=2.9\times10^{-4} $ s/Byte, the block subsidy $R=6.25$ BTC and the average mining time $T=600$ seconds, and three upper bounds of block size, i.e., $\bar{B}=$1 MB, 2 MB and $3$ MB, as well as the relative computing power of each pool is $h_i\in(0,1)$ and $h_1+h_2=1$.
Let us set $h_1=0.3$ and $h_2=0.7$ and take 0.2, 0.5 and 0.8 respectively as the initial values of $x_i, (i=1,2)$.
Under the setting, we get $\widehat{B}^{*}_1=-8.70\times10^5, \frac{T}{h_1\rho}-\frac{R}{\alpha}=6.47\times10^5$, and $\widetilde{B}_1^{*}=-6.43\times10^5, 6.95\times10^3, 6.69\times10^5$ corresponding to $\bar{B}=1\times10^6, 2\times10^6, 3\times10^6$, respectively. These results satisfy the conditions of $\widehat{B}^{*}_1<\bar{B}, \widetilde{B}_1^{*}<\bar{B}$ and $\frac{T}{h_1\rho}-\frac{R}{\alpha}<\bar{B}$. Fig. \ref{default} illustrates the evolution processes of behaviors of two pools with different default block sizes, verifying the result that $(1,0)$ is an ESS in Theorem \ref{theo1}. It is clear that the speed of convergence to fixed equilibrium point $(1,0)$, i.e., the strategy profile $(\widetilde{B}_1^{*}, \bar{B})$, becomes faster with the increasing of the maximum capacity of a block. 
 Hence larger upper bound $\bar{B}$ brings pool 2 more payoff, stimulating the speed of convergence to the strategy of $\bar{B}$. At the same time, $g'(B_2)>0$ in (\ref{g(B2)}) shows the optimal size of $B_1$ increases with $B_2$, and thus the rate of convergence to $\widehat{B}^{*}_1$ is accelerated by the increasing of $\bar{B}$.
 \begin{figure}[h!]
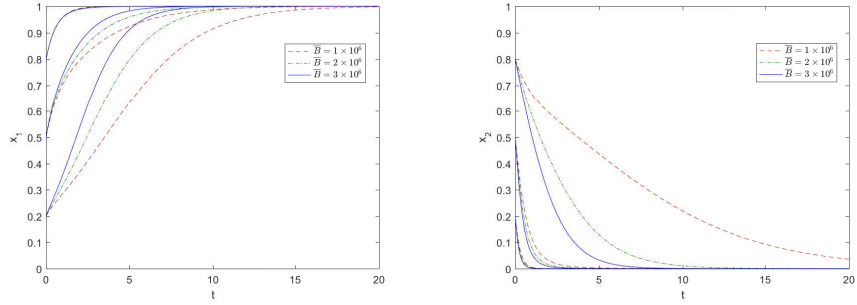

	\centering
	\subfigure[The evolutionary behavior of pool 1.]{
		\includegraphics[width=5.7cm]{default_x1.eps}\label{default1}}
	\quad
	\subfigure[The evolutionary behavior of pool 2.]{
		\includegraphics[width=5.7cm]{default_x2.eps}\label{default2}}
	\caption{The impact of the upper bound of block size $\bar{B}$ on pool's behaviors.}\label{default}
\end{figure}



Next, we change the value of $\alpha$ and $\rho$, i.e., $\{\alpha, \rho\}=\{0.8\times10^{-6}, 3\times10^{-4}\}$, and take $h_1=0.24$. Under this setting, $0<B_2^*=\frac{1}{\lambda_1\rho}-\frac{R}{\alpha}<\bar{B}$ and $\widehat{B}^{*}_1<\widetilde{B}_1^{*}<0$. Fig.\ref{ESS11} illustrates the evolutionary behaviors of the two pools are shown, in which the ESS is $(1,1)$, i.e., $(0,\frac{1}{\lambda_1\rho}-\frac{R}{\alpha})$ is the evolutionary stable strategy. From Fig.\ref{ESS11_1}, $x_1$ converges to 1 in a relatively short time, while $x_2$ takes much longer time to converge to 1 shown in Fig.\ref{ESS11_2}. In addition, when $x_2(0)\leq 0.5$, pool 2 has a strong tendency to take $B_2=\bar{B}$ initially, considering the small mining rate of pool 1 and its own large winning probability. However, as time goes by, pool 2 tends to realize the best response is just $B_2^*=\frac{1}{\lambda_1\rho}-\frac{R}{\alpha}$. This explains the transition from $x_2=0\dashrightarrow x_2=1$ when $x_2(0)\leq 0.5$. Hence,if with $\widehat{B}^{*}_1<\widetilde{B}_1^{*}<0$, the best response of pool 1 is $B_1=0$, meaning a block without any transactions just for faster propagation process.
\begin{figure}[H]
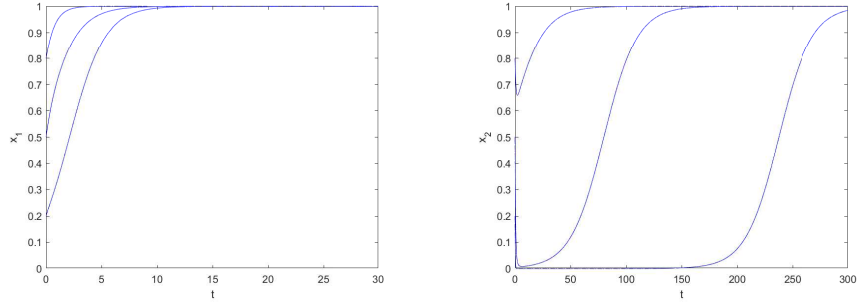

	\centering
	\subfigure[The evolutionary behavior of pool 1.]{
		\includegraphics[width=5.7cm]{ESS111.eps}\label{ESS11_1}}
	\quad
	\subfigure[The evolutionary behavior of pool 2.]{
		\includegraphics[width=5.7cm]{ESS112.eps}\label{ESS11_2}}
	\caption{The evolution of two pools when (1,1) is ESS.}\label{ESS11}
\end{figure}

\subsection{Conclusion}

In this paper, the issue of selecting appropriate block sizes by mining pools in a blockchain system is discussed. We model this block size determination problem as an evolutionary game, in which each pool may follow the upper bound of block size, i.e., the default size $\bar{B}$, or not. In addition, if a mining pool chooses not to follow $\bar{B}$, then it shall continue to decide its optimal block size under different strategy profile.
In our evolutionary game model, all mining pools are supposed to be bounded rational and each pool switches the low-payoff strategy to a higher one on and on by learning others' better strategies, until the whole network reaches an evolutionary stable state (ESS). The theoretical analysis has been done, particularly for a case of two mining pools, we prove the existence of different ESS under different conditions.
In addition to verify the results in our work, several numerical experiments by using real Bitcoin data are conducted to show the evolutionary decisions of mining pools.

\bibliographystyle{splncs04}
\bibliography{reference}

\newpage
\section*{Appendix}

\subsection*{A. Proof of Lemma \ref{g(B_2)}}
\begin{proof}
Because $g(B_2)$ satisfies
{\small\begin{eqnarray*}
\frac{dU_1(B_1,B_2)}{dB_1}=\alpha-[\alpha+\lambda_1\rho(R+\alpha g(B_2))]h_2 e^{\lambda_1\rho(g(B_2)-B_2)}=0,
\end{eqnarray*}}
we have
{\small\begin{eqnarray}\label{g(B2)}
	1>g'(B_2)=\frac{\alpha +\lambda_1\rho(R+\alpha g(B_2))}{2\alpha +\lambda_1\rho(R+\alpha g(B_2))}>0.
\end{eqnarray}}
So $g(B_2)$ is monotonically increasing with respect to $B_2$. Moreover, when $B_2=0$, $U_1(B_1,0)=(R+\alpha B_1)[1-h_2e^{\lambda_1\rho B_1}]$, and for all $B_1\geq 0$,
{\small\begin{eqnarray*}
\frac{d^2U_1(B_1,0)}{dB_1^2}<0,~\mbox{and}~\frac{dU_1(B_1,0)}{dB_1}|_{B_1=0}<0.
\end{eqnarray*}}
It means that function $U_1(B_1,0)$ is strictly concave, and the maximal point $g(0)$ is less than 0. Let us define an auxiliary function $G(B_2)=g(B_2)-B_2$. Since $g'(B_2)<1$, $G(B_2)$ decreases on interval $[0,\infty)$. Thus $G(B_2)\leq G(0)<0$ for all $B_2\geq 0$, indicating $g(B_2)<B_2$.   \qed
\end{proof}

\subsection*{B. Proof of Theorem \ref{NE}}
\begin{proof}
 Since $\frac{\partial U_2}{\partial B_2}|_{B_2=\frac{1}{\lambda_1\rho}-\frac{R}{\alpha}}=0$ and $\frac{\partial^2 U_2}{\partial B_2^2}<0$ for any block size of $B_1$, $U_2$ is strictly concave and achieves its maximality at the boundary point of $\bar{B}$ subject to the condition of $\frac{1}{\lambda_1\rho}-\frac{R}{\alpha}
 \geq \bar{B}$. Thus if $\frac{1}{\lambda_1\rho}-\frac{R}{\alpha}
 \geq \bar{B}$, then $\bar{B}$ is the dominant strategy of pool 2. In addition, given pool 2's strategy of $\bar{B}$, the payoff function $U_1$ only depends on $B_1$, denoted by $U_{1}(B_1,\bar{B})$. Since $\widetilde{B}_1^{*}$ is the solution of $\frac{dU_1(B_1,\bar{B})}{dB_1}=0$, $U_1$ reaches its maximality at $\widetilde{B}_1^{*}$ if $0\leq \widetilde{B}_1^{*}< \bar{B}$. Moreover,
 {\small\begin{eqnarray*}
 	\frac{d^2 U_1(B_1,\bar{B})}{d B_1^2}=-h_2\lambda_1\rho[2\alpha+\lambda_1\rho(R+\alpha B_1)] e^{\lambda_1\rho(B_1-\bar{B})}<0,
 \end{eqnarray*}}
 showing $U_1$ is strictly concave with respect to $B_1$.
 Therefore,
  {\small\begin{eqnarray}\label{limit}
 \pi_{12}^1=U_1(\widetilde{B}_1^{*},\bar{B})> \lim_{B_1\rightarrow \bar{B}^{-}}U_1(B_1,\bar{B})=h_1(R+\alpha\bar{B})=\pi_{22}^1,
 \end{eqnarray}}
meaning the best response of pool 1 is $\widetilde{B}_1^{*}$. Hence, $(\widetilde{B}^{*}_1,\bar{B})$ is a strict Nash equilibrium.
If $\widetilde{B}_1^{*}\geq \bar{B}$, then
{\small\begin{eqnarray}
 \pi_{22}^1=h_1(R+\alpha \bar{B})=\lim_{B_1\rightarrow \bar{B}^{-}}U_1(B_1,\bar{B})>max_{B_1\in [0,\bar{B})}U_1(B_1,\bar{B}),
 \end{eqnarray}}
by the strict concavity of $U_1$.
It means the best response of pool 1 is $\bar{B}$. So $(\bar{B},\bar{B})$ is a strict Nash equilibrium, if $\widetilde{B}_1^{*}\geq \bar{B}$.
When $\widetilde{B}_1^{*}< 0$, the strict concavity of $U_1$ ensures that $U_1(B_1,\bar{B})$ decreases when $B_1\in [0, \bar{B}]$, and thus
$U_1(0,\bar{B})=\max_{B_1\in [0,\bar{B}]}U_1(B_1,\bar{B})$. Therefore, the best response of pool 1 is $0$ and $(0,\bar{B})$ is a strict Nash equilibrium if $\widetilde{B}_1^{*}< 0.$
  \qed	
\end{proof}

\subsection*{C. Proof of Fact 1.}

\begin{proof}
It is not hard to see $L>0$ and $M>0$. By the concavity of $U_1(B_1,\bar{B})$ shown in the proof of Theorem \ref{NE}, $U_1(B_1,\bar{B})$ is monotonically decreasing when $B_1\in [\widetilde{B}_1^*,\bar{B}]$. Hence, if $0\leq \widetilde{B}_1^*<\bar{B}$, then $\pi_{12}^1=U_1(\widetilde{B}_1^*,\bar{B})>U_1(\bar{B},\bar{B})=\pi_{22}^1$; and if $\widetilde{B}_1^*<0$, then $\pi_{12}^1=U_1(0,\bar{B})>U_1(\bar{B},\bar{B})=\pi_{22}^1$. So $K>0$.

If $0\leq \widehat{B}_1^*< \widetilde{B}_1^*<\bar{B}$, let us define
{\small\begin{eqnarray*}
 F(B_2)=(R+\alpha B_2)h_2 e^{\lambda_1\rho(B_1^*-B_2)}=(R+\alpha B_2)h_2 e^{\lambda_1\rho(g(B_2)-B_2)}.
\end{eqnarray*}}
 Then
{\small\begin{eqnarray}
	\frac{dF(B_2)}{dB_2} &=& h_2\left[\alpha +\lambda_1\rho(R+\alpha B_2)(g'(B_2)-1)\right]e^{\lambda_1\rho(g(B_2)-B_2)}\nonumber\\
	&=&h_2\alpha e^{\lambda_1\rho(g(B_2)-B_2)}\left[\frac{2-\lambda_1\rho(B_2-g(B_2))}{2\alpha +\lambda_1\rho(R+\alpha g(B_2))}\right]>0,\label{F(B2)}
\end{eqnarray}}
where the inequality comes from $0<\lambda_1\rho(B_2-g(B_2))=h_1\frac{\rho(B_2-g(B_2))}{T}<\frac{\rho B_2}{T}<1$, because the propagation time $\rho B_2$ is no more than the interval $T=600sec$ in each one-shot competition. $F(B_2)$ is monotonically increasing with respect to $B_2$. Therefore, $\pi_{11}^2=F(B_2^*)<F(\bar{B})=\pi_{12}^2$, and $N<0$.

If $\widehat{B}_1^*<\widetilde{B}_1^*<0$,
let us define  $H(B_2)=(R+\alpha B_2)h_2 e^{-\lambda_1\rho B_2}$. Then
{\small\begin{eqnarray}
	\frac{dH(B_2)}{dB_2} &=& h_2e^{-\lambda_1\rho B_2}\left[\alpha -(R+\alpha B_2)\lambda_1\rho\right].
\end{eqnarray}}
Clearly, $H(B_2)$ decreases when $B_2\in [B^*_2,\bar{B}]$, and thus $\pi_{11}^2=F(B_2^*)>F(\bar{B})=\pi_{12}^2$. So we have $N=\pi_{11}^2-\pi_{12}^2>0$.

If $\widehat{B}_1^*<0<\widetilde{B}_1^*<\bar{B}$, then
{\small \begin{eqnarray}
  N&=&\left[R+\alpha(\frac1{\lambda_1\rho}-\frac{R}{\alpha})\right]h_2
  e^{-\lambda_1\rho(\frac1{\lambda_1\rho}-\frac{R}{\alpha})}-
  [R+\alpha\bar{B}]h_2e^{\lambda_1\rho(\widetilde{B}_1^*-\bar{B})}\nonumber\\
  &=&h_2e^{-\lambda_1\rho(\frac1{\lambda_1\rho}-\frac{R}{\alpha})}
  \left[\frac{\alpha}{\lambda_1\rho}-(R+\alpha\bar{B})
  e^{\lambda_1\rho(\widetilde{B}_1^*-\bar{B}-\frac{R}{\alpha})+1}\right].\nonumber
\end{eqnarray}}
Clearly,
\begin{eqnarray*}
  N = \pi_{11}^2-\pi_{12}^2
			= \left\{ \begin{array}{l}
<0,~\mbox{if}~ \lambda_1\rho(\frac{R}{\alpha}+\bar{B})e^{\lambda_1\rho(\widetilde{B}_1^{*}-\bar{B}-\frac{R}{\alpha})+1}>1;\\
>0,~\mbox{if}~ \lambda_1\rho(\frac{R}{\alpha}+\bar{B})e^{\lambda_1\rho(\widetilde{B}_1^{*}-\bar{B}-\frac{R}{\alpha})+1}<1.
	\end{array}
	\right.
\end{eqnarray*}
\qed
\end{proof}

\subsection*{D. Proof of Theorem \ref{NE_ESS}.}

\begin{proof}
 Since $\frac{\partial U_2}{\partial B_2}|_{B_2=\frac{1}{\lambda_1\rho}-\frac{R}{\alpha}}=0$ and $\frac{\partial^2 U_2}{\partial B_2^2}<0$ for $0\leq B_1\leq \bar{B}$, $U_2$ is strictly concave and achieves its maximality at $B_2=0$ subject to $\frac{1}{\lambda_1\rho}-\frac{R}{\alpha}\leq 0$ and $0\leq B_2\leq \bar{B}$, thus $U_2(B_1,0)>U_2(B_1,\bar{B})$ for any $0\leq B_1\leq \bar{B}$.

Let us consider the utility function of pool 1 $U_(B_1,0)$ when $B_2=0$. Then
 {\small\begin{eqnarray*}
 	\frac{d U_1(B_1,0)}{d B_1}&=&\alpha-h_2e^{\lambda_1\rho B_1}[\alpha+\lambda_1\rho(R+\alpha B_1)]
 \leq \alpha[1-h_2e^{\lambda_1\rho B_1}(2+\lambda_1\rho B_1)]\\
 &<& \alpha(1-2h_2)<0;\\
 \end{eqnarray*}}
where the first inequality is true from the condition of $\frac{1}{\lambda_1\rho}-\frac{R}{\alpha}\leq 0$, the second one is correct due to $0<\lambda_1\rho B_1<1$, and the last one is from the assumption that
$1/2<h_2<1$. Therefore, $U_1(B_1,0)$ achieves its maximality at point $(0,0)$.
This result also verifies that the conclusion that `` a miner with less mining power prefers a smaller block size in order to optimize his payoff" in \cite{JW19}.

Recall that $\widetilde{B}_1^*$ is obtained from $\frac{d U_1(B_1,\bar{B})}{d B_1}=0$. Because $\frac{d^2 U_1(B_1,\bar{B})}{d B_1^2}<0$, $U_1(B_1,\bar{B})$ is strictly concave on $B_1$ when $B_2=\bar{B}$. Thus, $B_1=0$ is the maximal point if $\widetilde{B}_1^*\leq 0$; $B_1=\widetilde{B}_1^*$ is the maximal point if $0<\widetilde{B}_1^*<\bar{B}$; and $B_1=\bar{B}$ achieves the maximum of $U_1(B_1,\bar{B})$ if $\widetilde{B}_1^* \geq \bar{B}$.

To be specific, when $\widetilde{B}_1^*\leq 0$, the block size game is similar to a Prisoner's Dilemma (Table \ref{PDgame1}). It is obvious that the best response of pool 1 to $B_2=\bar{B}$ is $B_1=0$, due to $U_1(0,\bar{B})>U_1(\bar{B},\bar{B})$. However, for any $B_1\geq0$, the optimal strategy of pool 2 is $B_2=0$. Hence, the strategy profile $(0,0)$ is a strict NE if $\widetilde{B}_1^*\leq 0$.
 {\small\begin{table}[H]
\begin{center}
\caption{The payoff matrix under the condition of $\widetilde{B}_1^*\leq 0$.}\label{PDgame1}
\begin{tabular}{C{2cm}|C{4cm}|C{4cm}}
\hline
  & $B_2=0$    & $B_2=\bar{B}$                                                                                                                                                                                                           \\ \hline
$B_1=0$        & $U_1(0,0),U_2(0,0)$    & $U_1(0,\bar{B}),U_2(0,\bar{B})$
\\ \hline
$B_1=\bar{B}$  & $\backslash, \backslash$  & $U_1(\bar{B},\bar{B}),U_2(\bar{B},\bar{B})$
\\ \hline
\end{tabular}
\end{center}
\end{table}}

When $\widetilde{B}_1^*\geq \bar{B}$, $U_1(B_1,\bar{B})$ reaches its maximum at point $B_1=\bar{B}$. Then the corresponding payoff matrix is shown in Table \ref{PDgame2}. Since $U_i(0,0)<U_i(\bar{B},\bar{B}), i=1,2$, the strategy profile $(\bar{B},\bar{B})$ is a strict NE if $\widetilde{B}_1^*\geq \bar{B}$.

{\small\begin{table}[H]
\begin{center}
\caption{The payoff matrix under the condition of $\widetilde{B}_1^*\geq \bar{B}$.}\label{PDgame2}
\begin{tabular}{C{2cm}|C{4cm}|C{4cm}}
\hline
  & $B_2=0$    & $B_2=\bar{B}$                                                                                                                                                                                                           \\ \hline
$B_1=0$        & $Rh_1,Rh_2$    & $\backslash, \backslash$
\\ \hline
$B_1=\bar{B}$  & $\backslash, \backslash$  & $(R+\alpha \bar{B})h_1,(R+\alpha \bar{B})h_2$
\\ \hline
\end{tabular}
\end{center}
\end{table}}

At last, we analyze the case of $0<\widetilde{B}_1^*<\bar{B}$ in the evolutionary game. For strategy profile $(0,0)$, the payoffs are
$$\pi_{11}^1=Rh_1,\pi_{11}^2=Rh_2 ;$$
For strategy profile $(\widetilde{B}_1^*,\bar{B})$, the payoffs are
$$\pi_{12}^1=(R+\alpha \widetilde{B}_1^*)[1-h_2e^{\lambda_1 \rho(\widetilde{B}_1^*-\bar{B})}],\pi_{12}^2=(R+\alpha \bar{B})h_2e^{\lambda_1 \rho(\widetilde{B}_1^*-\bar{B})} ;$$
For strategy profile $(\bar{B},\bar{B})$, the payoffs are
$$\pi_{22}^1=(R+\alpha \bar{B})h_1,\pi_{22}^2=(R+\alpha \bar{B})h_2. $$

Then $L=\pi_{11}^1>0, M=\pi_{22}^2>0$, $K=\pi_{12}^1-\pi_{22}^1$ and $N=\pi_{11}^2-\pi_{12}^2$.
Furthermore, we define function $H(x)=(R+\alpha x)[1-h_2 e^{\lambda_1 \rho(x-\bar{B})}], x\in(0,\bar{B})$, thus $K=H(\widetilde{B}_1^*)-H(\bar{B})$, then

{\small\begin{eqnarray*}
\frac{d H(x)}{dx}&=&\alpha-h_2 e^{\lambda_1 \rho(x-\bar{B})}[\alpha+\lambda_1 \rho(R+\alpha x)];\\
\frac{d^2 H(x)}{dx^2}&=&-h_2\lambda_1 \rho e^{\lambda_1 \rho(x-\bar{B})}[2\alpha+\lambda_1 \rho(R+\alpha x)]<0.
\end{eqnarray*}}
Note that $x=\widetilde{B}_1^*$ satisfies $\alpha-h_2 e^{\lambda_1 \rho(x-\bar{B})}[\alpha+\lambda_1 \rho(R+\alpha x)]=0$, i.e., $H'(\widetilde{B}_1^*)=0$. With $\frac{d^2 H(x)}{dx^2}<0$, there exists $H(\widetilde{B}_1^*)>H(\bar{B})$, i.e., $K>0$. And
{\small\begin{eqnarray*}
 N&=&\pi_{11}^2-\pi_{12}^2\left\{ \begin{array}{l}
    >0, if \frac{R}{R+\alpha\bar{B}} > e^{\lambda_1\rho(\widetilde{B}_1^{*}-\bar{B})};\\
    <0, if \frac{R}{R+\alpha\bar{B}} < e^{\lambda_1\rho(\widetilde{B}_1^{*}-\bar{B})}.
 \end{array} \right.
 \end{eqnarray*}}
So, the fixed point $(1,1)$, corresponding to the evolutionary stable strategy profile $(0,0)$, is an ESS if $\frac{R}{R+\alpha\bar{B}} > e^{\lambda_1\rho(\widetilde{B}_1^{*}-\bar{B})}$; the fixed point $(1,0)$, corresponding to the evolutionary stable strategy profile $(\widetilde{B}_1^*,\bar{B})$, is an ESS if $\frac{R}{R+\alpha\bar{B}} < e^{\lambda_1\rho(\widetilde{B}_1^{*}-\bar{B})}$.
\qed
\end{proof}

\end{document}